\providecommand{\customgenericname}{}
\newcommand{\newcustomtheorem}[2]{%
  \newenvironment{#1}[1]
  {%
   \renewcommand\customgenericname{#2}%
   \renewcommand\theinnercustomgeneric{##1}%
   \innercustomgeneric
  }
  {\endinnercustomgeneric}
}
\newtheorem{theorem}{Theorem}
\newtheorem{defin}{Definition}
\newtheorem{lemma}{Lemma}
\newtheorem{corollary}{Corollary}
\newtheorem{proposition}{Proposition}
\newcommand{\ox}[0]{\ensuremath{\otimes}}
\newcommand{\id}[0]{\ensuremath\mathbb 1}
\newcommand{\ff}[0]{\ensuremath{\mathbb{F}_2}}
\DeclareMathOperator{\im}{\ensuremath{\mathrm{Im}}}
\DeclareMathOperator{\rk}{\ensuremath{\mathrm{rank}}}
\newcommand{\cab}{\ensuremath{\mathcal{C}(\delta_A, \delta_B)}}
\newcommand{\wtrc}{\ensuremath{\mathrm{wt_{rc}}}}
\newcommand{\wtrcl}{\ensuremath{\mathrm{wt_{rc}^{log}}}}
\newcommand{\col}{\ensuremath{\mathrm{col}}}
\newcommand{\row}{\ensuremath{\mathrm{row}}}
\newcommand{\coll}{\ensuremath{\mathrm{col_{log}}}}
\newcommand{\rowl}{\ensuremath{\mathrm{row_{log}}}}
\begin{document}
\title{ReShape: a decoder for hypergraph product codes}
\author{Armanda~O.~Quintavalle
and Earl~T.~Campbell
\thanks{This work was supported by the Engineering and Physical Sciences Research Council [grant numbers EP/M024261/1 (E.T.C)].}%
\thanks{A. O. Quintavalle is with Department of Physics \& Astronomy, University of Sheffield, Sheffield, S3 7RH, United Kingdom (e-mail: armandaoq@gmail.com).}%
\thanks{E. T. Campbell is with Department of Physics \& Astronomy, University of Sheffield, Sheffield, S3 7RH, United Kingdom and Riverlane, Cambridge, CB2 3BZ, United Kingdom.
}%
}

\maketitle

\begin{abstract}
The design of decoding algorithms is a significant technological component in the development of fault-tolerant quantum computers.  Often design of quantum decoders is inspired by classical decoding algorithms, but there are no general principles for building quantum decoders from classical decoders.  Given any pair of classical codes, we can build a quantum code using the hypergraph product, yielding a hypergraph product code.  Here we show we can also lift the decoders for these classical codes.  That is, given oracle access to a minimum weight decoder for the relevant classical codes, the corresponding $[[n,k,d]]$ quantum code can be efficiently decoded for any error of weight smaller than $(d-1)/2$.  The quantum decoder requires only $O(k)$ oracle calls to the classical decoder and $O(n^2)$ classical resources. The lift and the correctness proof of the decoder have a purely algebraic nature that draws on the discovery of some novel homological invariants of the hypergraph product codespace. While the decoder works perfectly for adversarial errors, that is errors of weight up to half the code distance, it is not suitable for more realistic stochastic noise models and therefore can not be used to establish an error correcting threshold.
\end{abstract}

\IEEEpeerreviewmaketitle
The construction of quantum codes often takes classical codes as a starting point. The CSS construction is one method for combining a pair of classical codes into a quantum code.  However, the CSS recipe only works when the pair of classical codes are dual to each other.  Unfortunately, some of the best known classical code families, such as those based on expander graphs, do not come in convenient dual pairs.  The hypergraph product is a different recipe that allows a pair of arbitrary classical codes to form the basis of a quantum code~\cite{tillich2014quantum}.  Crucially, when the hypergraph product uses families of classical low-density parity check (LDPC) codes, it leads to families of quantum-LDPC codes.  The quantum-LDPC property eases the experimental difficulty of implementation and, combined with suitably growing distance, ensures the existence of an error correction threshold~\cite{Kovalevbadcode}.  

Two of the most widely known quantum codes, the toric and planar surface codes, are hypergraph product codes that use the classical repetition code as their seed classical code.  The decoding problem for the surface code can be recast as a minimum-weight perfect-matching problem, which is efficiently solved by the blossom algorithm~\cite{dennis02,FowlerRepMeasure} and the union-find algorithm \cite{delfosse2017almost}. Another interesting class of hypergraph product codes uses classical expander codes as their seed, with the resulting offspring called quantum expander codes~\cite{leverrier2015quantum}, which are quantum-LDPC codes achieving both constant rate and $\Omega(\sqrt{n})$ distance.  The classical expander codes can be decoded by a very simple bit-flip algorithm discovered by Spiser and Spielman~\cite{sipser1996expander}.  This inspired the small-set flip decoder for quantum expander codes, which follows a similar idea but is slightly modified, and has been shown to correct adversarial errors~\cite{leverrier2015quantum}, stochastic errors~\cite{fawzi2018} and also to operate as a single-shot decoder~\cite{leverrier18}. However, any binary linear code can be used as a seed to build hypergraph product codes. Using classical codes other than repetition and expander codes, for instance the semi-topological codes proposed in \cite{roffe2020decoding}, yield a broad range of hypergraph product codes for which there is no general propose decoder that is proven to work across the whole code family. 
For classical LDPC codes, using a belief propagation decoder (BP) works well in practice but it cannot be used out of the box on quantum-LDPC codes. In fact whenever a decoding instance has more than one minimum weight solution, it is degenerate, BP does not converge and yields a decoding failure. Degeneracy is the quintessential feature of quantum codes and therefore some workarounds are needed to use BP on quantum-LDPC codes \cite{poulin08, babar}. The literature offers many examples of BP inspired decoders for quantum-LDPC codes which show an error correcting threshold~\cite{ liu2019neural,rigby,panteleev2019degenerate, yoder20,grospellier2020combining, roffe2020decoding}, however none of them come with a correctness proof. Recently, a union-find like decoder has been proposed to decode quantum-LDPC codes \cite{delfosse2021union_ldpc}. The authors in \cite{delfosse2021union_ldpc} prove that their union-find decoder corrects for all errors of weight up to a polynomial in the distance for three classes of quantum-LDPC codes: codes with linear confinement (see \cite{bombin2015single,quintavalle2020single}), $D$-dimensional hyperbolic codes and $D$-dimensional toric codes for $D\ge 3$. The decoder in \cite{delfosse2021union_ldpc} is therefore provably correct for adversarial noise, nonetheless a comprehensive investigation of its performance under stochastic noise is still missing. 

Here we introduce the ReShape decoder for generic hypergraph product codes.  Given a $[[n,  k, d]]$ hypergraph product code built using classical codes with parity matrices $\delta_A$ and $\delta_B$, we assume access to a minimum weight decoder for parity matrices $\delta_A$, $\delta_B$, $\delta_A^T$ and $\delta_B^T$.   The ReShape decoder calls these classical decoders as blackbox oracles without any modification or knowledge of their internal working, and furthermore only requires $O(k)$ oracle calls, and only a polynomial amount of additional classical computation. Under these conditions we prove that ReShape works in the adversarial setting, correcting errors (up to stabilisers) of weight less than half the code distance.  Therefore, ReShape lifts the classical decoders to the status of a quantum decoder, providing the first general purpose hypergraph product codes decoder proven to correct adversarial errors. Formally we prove:
\begin{theorem}
\label{thm:reshape_success}
Any $[[n, k, d]]$ hypergraph product code constructed from the classical parity check matrices $\delta_A$ and $\delta_B$ can be successfully decoded from error of weight up to $(d-1)/2$ using $O(k)$ oracle calls to classical decoders for the seed matrices and their transpose plus $O(n^2)$ classical operations.
\end{theorem}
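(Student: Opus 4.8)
The plan is to correct bit-flip ($X$-type) and phase-flip ($Z$-type) errors separately; by the transpose symmetry of the hypergraph product the phase-flip task is the bit-flip task with $\delta_A,\delta_B$ replaced by $\delta_A^T,\delta_B^T$, so I describe only the $X$-case, for an error $\epsilon$. The first move is to \emph{reshape}: writing the restrictions of $\epsilon$ to the two qubit sublattices as matrices $E_1,E_2$, the tensor-product form of the parity checks turns ``syndrome $=H_Z\epsilon$'' into a matrix identity $\sigma=\delta_A E_1+E_2\,\delta_B^T$; in the same language the $X$-stabilisers are exactly the pairs $(X\delta_B^T,\,\delta_A X)$, and, via the Künneth decomposition of hypergraph-product homology, the $X$-logicals split into a ``horizontal'' family in $H_1(\mathcal{A})\otimes H_0(\mathcal{B})$ and a ``vertical'' family in $H_0(\mathcal{A})\otimes H_1(\mathcal{B})$, where $\mathcal{A},\mathcal{B}$ are the two-term seed complexes of $\delta_A,\delta_B$. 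Note that the decoder need not output a low-weight error: it is enough to return \emph{any} $(E_1',E_2')$ with the correct syndrome that differs from $\epsilon$ by a stabiliser.

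The heart of the proof is to isolate a small set of ``homological invariants'' of $\epsilon$ — projections onto the homology-tensor spaces above — that (i) are fixed by every stabiliser move, hence pin $\epsilon$ down modulo stabilisers, and (ii) are each extractable with $O(1)$ oracle calls. Reducing $\sigma$ modulo $\mathrm{im}\,\delta_A$ annihilates the $\delta_A E_1$ term and leaves, slice by slice over a basis of $H_0(\mathcal{A})$, genuine classical $\delta_B$-syndromes of the corresponding slices of $E_2$; passing these to the $\delta_B$-oracle recovers the vertical data of $E_2$, and symmetrically reducing modulo $\mathrm{im}\,\delta_B$ and calling the $\delta_A$-oracle recovers the horizontal data of $E_1$. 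A short lemma then shows that once these partial reductions are known, \emph{any} lift of them, subtracted from $\sigma$, leaves a bilinear equation $\delta_A\,U\,\delta_B^T=\sigma''$ \emph{all} of whose solutions $U$ give an error differing from $\epsilon$ by a stabiliser — because the solution ambiguity $\ker(\delta_A\otimes\delta_B)$ is precisely what a type-1/type-2 stabiliser absorbs — so $U$, and with it the full correction, is obtained by plain linear algebra. The oracle is called only on slices whose seed code has nonzero kernel (otherwise the slice is already fixed by linear algebra), and the number of these is at most $\dim H_0(\mathcal{A})$ when $\ker\delta_B\neq0$ plus $\dim H_0(\mathcal{B})$ when $\ker\delta_A\neq0$, which is at most $k$; together with the symmetric $Z$-case this gives the claimed $O(k)$ oracle calls, while the reshaping and the dense linear algebra on the (small) seed matrices cost $O(n^{2})$.

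For correctness I must ensure every classical instance actually sent to an oracle lies inside that seed code's unique-decoding radius. Each slice extracted above is an $\mathbb{F}_2$-linear combination of rows of $E_1$ or $E_2$, so its weight is at most $|\epsilon|\le(d-1)/2$; and whenever a seed oracle is invoked the corresponding code has nontrivial kernel, so the hypergraph-product distance bound forces $d$ to be no larger than that code's classical distance, whence the min-weight oracle returns exactly the true slice. Granting that all invariants come back correct, the two remaining checks are that the reconstructed pair has the measured syndrome (immediate from the construction) and that it differs from $\epsilon$ only by a stabiliser — this is the content of the homological lemmas: each lift ambiguity has the form $\delta_A Y$, which completes to the stabiliser $(Y\delta_B^T,\delta_A Y)$, and the residual freedom in the final linear system lies in $\ker(\delta_A\otimes\delta_B)$.

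I expect the delicate part to be the second step: pinning down the correct homological invariants and proving simultaneously that they are \emph{complete} (they determine $\epsilon$ modulo stabilisers), \emph{cheap} ($O(1)$ classical decodings each, hence $O(k)$ in total), and \emph{clean} — that all the leftover freedom, both from the non-unique lifts and from the under-determined bilinear system $\delta_A U\delta_B^T=\sigma''$, is always absorbed by a stabiliser and never leaks into a nontrivial logical class. Formalising this last point is precisely the homological bookkeeping for the hypergraph-product codespace that has to be built from scratch, and lining up the distance inequalities with the weight budget $(d-1)/2$, so that every oracle call stays within its unique-decoding radius, is the quantitative crux.
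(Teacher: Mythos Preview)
Your proposal is correct and takes essentially the same route as the paper: both isolate the ``logical part'' of the error (the paper's canonical form $(O_L,O_R)$ together with the logical row-column weight invariant of Proposition~\ref{prop:invariant}; your K\"unneth projections onto the homology-tensor summands), decode each of the $O(k)$ resulting one-dimensional slices with a classical min-weight oracle using the bound $|\epsilon|\le(d-1)/2<d_{\mathrm{seed}}/2$, and then argue that all remaining freedom is a stabiliser. The one organisational difference is that the paper first produces an arbitrary valid solution of \eqref{eq:se} and corrects its logical part in place, whereas you extract the invariants directly from the reduced syndrome and absorb the residual into a bilinear equation $\delta_A\,U\,\delta_B^T=\sigma''$; showing that the kernel of this last map contributes only stabilisers is precisely the content of Corollary~\ref{cor:logical}.
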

Theorem \ref{thm:reshape_success} though, does not state anything about stochastic noise or error correcting thresholds. Families of $n$-qubit hypergraph product codes have distance of at most $O(\sqrt{n})$ and so they are bad codes in the sense that the distance is sub-linear. However, given a stochastic noise model with each qubit affected independently with probability $p$, the typical error size will be $pn$. Thus, for $n > (d/2p)$, the most likely errors will not necessarily be corrected by ReShape and there is no guarantee that a threshold will be observed.  Indeed, we implemented ReShape for several code families and found evidence that ReShape fails to provide a threshold (see Figure \ref{fig:anti_threshold}).  A clear open problem is whether there exists a similar general lifting procedure, or modification of ReShape, for which one can prove good performance in the stochastic settings. Hence, if on one hand Theorem \ref{thm:reshape_success} provides a solution to the adversarial decoding problem for hypergraph product codes, on the other, a stronger, difficult and much longed-for result is desirable. Namely, the solution of the stochastic decoding problem for hypergraph product codes both on a theoretical level (proof of a threshold) and on a practical one (numerical observation of a high correcting threshold). Even so, ReShape still provides some improvement over state-of the-art BP and union-find like decoders for stochastic noise. First, ReShape comes with a proof of correctness, that BP lacks; second, the proof works for all errors up to the optimal value of $(d-1)/2$, whilst the modification of union-find proposed in \cite{delfosse2021union_ldpc} is provably correct only for errors of weight up to $A d^{\alpha}$, for some $A, \alpha>0$ and $\alpha < 1$.

\section{Preliminaries and notation}
\label{sec:preliminaries}
A classical $[n, k, d]$ linear code is compactly described by its parity check matrix $H$. The matrix $H$ is a binary matrix of size $m \times n$ such that the codespace $\mathcal C(H) \subseteq \mathbb F_2^n$ is described by:
\begin{align}
\label{eq:classical_code}
    \mathcal{C}(H) = \{v \in \mathbb F_2^n &: Hv = 0\}.
\end{align}
The codespace $\mathcal{C}(H)$ has dimension $k = n - \rk(H)$ and distance
$d$ defined as:
\begin{align*}
    d = \min \{|v| &: v \in \mathcal C(H), v\neq 0\},
\end{align*}
where $|v|$ is the Hamming weight of the binary vector $v$. Whenever the parity check matrix has columns and rows of small weight we say that it is a low density parity check (LDPC) matrix; when $H$ has constant column and row weight $w_c, w_r$ we shortly say that it is a $(w_c, w_r)$-matrix.
 
The classical decoding problem can be stated as: given a \emph{syndrome} vector $s \in \mathbb F_2^m$, find the minimum weight solution $e \in \mathbb{F}_2^n$ to the equation \begin{align}
    \label{eq:classical_decoding}
    He = s.
\end{align}
It is easy to show that the optimal decoder for any classical linear code can correct errors of weight up to half the code distance (see, for instance, \cite{huffman2010fundamentals}). 

A quantum $[[n, k, d]]$ stabiliser code \cite{gottesman1997stabilizer} is a subspace of dimension $2^k$ of the Hilbert space $(\mathbb C^2)^{\ox n}$. It is described as the common $+1$ eigenspace of its stabiliser group $\mathcal S$, an Abelian subgroup of the Pauli group $\mathcal{P}_n$ such that $-\id \not \in \mathcal{S}$. The Pauli group on $n$ qubits is the group generated by the $n$-fold tensor product of single qubit Pauli operators. The weight $|P|$ of a Pauli operator $P \in \mathcal P_n$ is the number of its non-identity factors. We indicate by $\mathcal{N}(\mathcal S)$ the normaliser of $\mathcal S$ i.e.\ the group of Paulis which commute with the stabiliser group $\mathcal{S}$. Because $\mathcal{S} \subseteq \mathcal{N}(\mathcal{S})$, the quotient group 
\begin{align*}
\mathcal L =  \nicefrac{\mathcal{N(\mathcal S)}}{\mathcal{S}}  \end{align*}
is well defined and referred to as \emph{homology group}, see Appendix \ref{app:hpg_codes}. Elements $[P]$ of $\mathcal{L}$ are \emph{homology classes}: equivalence classes with respect to the congruence modulo multiplication by stabiliser operators. Explicitly:
\begin{align}
\label{eq:homology}
    [P] = \{PS &: S \in \mathcal{S}\},
\end{align}
and for any Pauli $P$, its homology class $[P]$ is uniquely defined via Eq.~\eqref{eq:homology}. Importantly, each Pauli $P$ such that $[P] \neq [\id]$ in $\mathcal L$ is an operator that preserves the codespace and has non-trivial action on it. We refer to such code operators modulo $\mathcal S$ as  \emph{logical Pauli operators}; with slight abuse of notation we write $P \in \mathcal L$, meaning $[P] \in \mathcal L$. Two logical operators $P, Q$ are said to be \emph{homologically equivalent}, or just equivalent, if and only if they belong to the same homology class i.e.\ by Eq.~\eqref{eq:homology}, if and only if $ [P] = [Q]$. Importantly, for a code of dimension $k$, $\mathcal L \simeq \mathcal{P}_k $. The distance $d$ of the code is the minimum weight of any non-trivial logical operator in $\mathcal L$. Any generating set of the stabiliser group $\mathcal S$ induces a syndrome map $\sigma$. Namely, if $\mathcal S = \langle S_1, \dots, S_m \rangle $, the associated syndrome function $\sigma$ maps any Pauli $P \in \mathcal{P}_n$ in a binary vector $s=(s_1, \dots, s_m)^T \in \mathbb F_2^m$ such that $s_i = 0$ if and only if $P$ commutes with $S_i$ and $1$ otherwise. We refer to the vector $s$ as the\emph{ syndrome}. Conventionally, when considering a stabiliser code, it is always intended that a generating set $\{S_1, \dots, S_m\}$ for the stabiliser group is chosen and with it a syndrome map. We say that a stabiliser code is LDPC if each $S_i$ has low weight and each qubit is in the support of only a few generators.

The decoding problem for stabiliser codes can be stated as: given a syndrome vector $s \in \mathbb F_2^m$, find an operator $E_r \in \mathcal{P}_n$ such that (i) $\sigma(E_r) = s$ and (ii) $[E_r] = [E_{\min}]$, where $E_{min}$ is a minimum weight operator with syndrome $s$. We call any operator that satisfies (i) a \emph{valid} solution of the syndrome equation and operators for which both (i) and (ii) are true, \emph{correct} solutions.

Pauli operators can be put into a one-to-one correspondence with binary vectors, if we discard the phase factor $\pm i$. In fact, any Pauli $P$ can be written as:
\begin{align*}
    P &\propto X[v] \cdot Z[w], \\
    &= X^{v_1} \ox \dots \ox X^{v_n} \cdot Z^{w_1} \ox \dots \ox Z^{w_n}, & v, w \in \mathbb F_2^n
\end{align*}
from which it follows:
\begin{align}
    (X[v] Z[w])(X[v']Z[w']) &= \pm X[v+v']Z[w+w'], \label{eq:sum_vectors}
\end{align}
and two operators commute if and only if 
\begin{align}
\langle v, w' \rangle + \langle v', w \rangle = 0 \mod 2
\label{eq:scalar_product}
\end{align}
and anti-commute otherwise.
This correspondence between binary vectors and Pauli operators is particularly handy when dealing with CSS codes \cite{calderbank1996good, steane1996multiple}. CSS codes are stabiliser codes for which the stabiliser group can be generated by two disjoint sets $\mathcal S_x$ and $\mathcal S_z$ of $X$ and $Z$ type operators respectively. If $\mathcal S_x = \{X[v_1], \dots X[v_{m_x}]\}$, $\mathcal S_z = \{Z[w_1], \dots, Z[w_{m_z}]\}$ and we define $H_X$ and $H_Z$ as the matrices whose rows are the $v_i$s and the $w_i$s respectively, then the commutation relation on the stabilisers generators translate in to the binary constraint $H_X H_Z^T = 0$. Using Eq.~\eqref{eq:scalar_product}, it is easy to show that the syndrome for a Pauli error $E = X[e_x]Z[e_z]$ is described by the two binary vectors $s_z = H_Z e_x$ and $s_x = H_X e_z$. Since these two linear equations are independent, we can treat the $X$-part and $Z$-part of the error separately. For CSS codes, we define the $X$-distance $d_x$ as the minimum weight of an operator $X[v]$ which commutes with all the stabilisers in $\mathcal{S}_z$ but does not belong to the group generated by $\mathcal S_x$. Note that the weight of an operator $X[v]$ equates the Hamming weight $|v|$ of the vector $v$. Therefore, combining Eq.~\eqref{eq:sum_vectors}, Eq.~\eqref{eq:scalar_product} and the definition of $d_x$, we shortly say that $d_x$ is the minimum weight of a vector $v$ in $\ker H_Z$ which does not belong to the row span of $H_X$, i.e.
\begin{align} d_x := \mathrm{min} \{ |v| : H_Z v = 0, v \not \in \im H_X^T \}
\end{align}
Similarly, $d_z$ is the minimum weight of a vector in $\ker H_X$ not in $\im H_Z^T$.  

The $Z$-error decoding problem for CSS code can be stated as: given a syndrome vector $s \in \mathbb F_2^{m_x}$, find a valid and correct solution $e \in \mathbb F_2^n$ to the equation:
\begin{align}
    \label{eq:css_decoding}
    H_X e = s,
\end{align}
where $e_r$ is valid if and only if $H_X e_r = s$ and it is correct if and only if it belongs to the homology class of the minimum weight operator with syndrome $s$. Because for an operator $Z[e]$ its weight equates the Hamming weight of the vector $e$, the $Z$-decoding problem for CSS codes can be reformulated exactly as done for the classical decoding problem in Eq.~\eqref{eq:classical_decoding}. Explicitly, given $s$, find the minimum weight solution to the linear equation $H_X e = s$. The $X$-decoding problem is derived from Eq.~\eqref{eq:css_decoding} by duality, exchanging the role of $X$ and $Z$. 

It goes without saying that, if any CSS code defines two classical parity check matrices, the converse is also true. Namely, starting from any two binary matrices $H_1, H_2$ such that $H_1 H_2^T = 0$, this defines a CSS code with $H_X = H_1$, $H_Z = H_2$. If the classical linear code with parity check $H_i$ has parameters $[n, k_i, d_i]$, the associated quantum code code has parameters $[[n, k_1 + k_2 - n, d_x, d_z]]$ where $d_x \ge d_2$ and $d_z \ge d_1$. A review on quantum codes can be found, for instance, in \cite{roffe2019quantum, NC01b}. \\

In this article we focus on a sub-class of CSS codes, the hypergraph product codes \cite{tillich14, bravyi14, audoux2015tensor, zeng2019higher}. We give a minimal description of these codes in Section \ref{sec:main} and we refer the reader to Appendix \ref{app:hpg_codes} for a more detailed presentation. We study some homology invariants for the logical operators of the hypergraph product codes in Section \ref{sec:algebraic_properties}. These invariants are the algebraic core upon which we design a decoder for these codes, the ReShape decoder. We prove that ReShape is an efficient and correct decoder for adversarial noise in Section \ref{sec:reshape}. We conclude with some consideration on the performance of ReShape under stochastic noise in Section \ref{sec:stochastic}.  
\section{Hypergraph product codes}
\label{sec:main}
We here present a bottom-up overview on hypergraph product codes. The purpose of this Section is dual: we both want to describe the hypergraph product codes with the least possible technical overhead and introduce the notation necessary to motivate and give an intuition for the results presented in Section \ref{sec:results}. We refer the reader interested in the homology theory approach to Appendix \ref{app:hpg_codes}.

The most well-known example of hypergraph product code is the toric code and its variations \cite{Kit03, bravyi1998quantum}. The toric code is conventionally represented by a square lattice where qubits sit on edges, $X$-stabilisers are identified with vertices and $Z$-stabilisers with faces. Since a square lattice has two kind of edges, vertical and horizontal edges, the first evident feature of this identification is that, accordingly, there are two type of qubits. The second is that each vertex/$X$-stabiliser uniquely identifies a row of horizontal edges and a column of vertical one, starting from the four ones that are incident to it. The third is that faces/$Z$-stabilisers, similarly to vertices, uniquely identify a column of horizontal edges and a row of vertical ones, starting from the four which lie on its boundary. Very similar attributes can be found in all the hypergraph product codes, as we now explain.

Consider two classical parity check matrices $\delta_A, \delta_B$ of size $m_a \times n_a$ and $m_b \times n_b$; we indicate with $\cab$ their hypergraph product code and refer to the matrices $\delta_A$ and $\delta_B$ as \emph{seed} matrices. The qubits of the code $\cab$ can be labelled as \emph{left} and \emph{right} qubits. Left qubits can be placed in a $n_a \times n_b$ grid and right qubits in a $m_a \times m_b$ grid, see Figure \ref{fig:qubits_grid}. Under this labelling, left and right qubits are uniquely identified by pair of indices $(j_a, j_b)$ and $(i_a, i_b)$ respectively, where $j_a, j_b$ vary among the column indices of $\delta_A, \delta_B$ while $i_a, i_b$ vary among their row indices. Given a pair $(L, R)$ of binary matrices, of size $n_a \times n_b$ and $m_a \times m_b$ respectively, we define the $Z$-operator:
\begin{align}
\label{eq:z_operator}
    Z(L, R) &=\left( \bigotimes_{j_a, j_b} Z^{L_{j_a, j_b}}\right) \otimes \left(\bigotimes_{i_a, i_b} Z^{R_{i_a, i_b}}\right),
\end{align}
and similarly for $X$-operators. We refer to $L$ as the left part of the operator and to $R$ as its right part.

The code $\cab$ has $m_a \times n_b$ $X$-stabiliser generators which can be indexed by $(i_a, j_b)$. The $X$-stabiliser $S^x{(i_a, j_b)}$ has support contained in the union of the $j_b$th column of left qubits and the $i_a$th row of right qubits. More precisely\footnote{Here and in the following, for a $m \times n$ matrix $\delta$ we indicate by $\delta_i \in \mathbb F_2^n$ the transpose of its $i$th row, and by $\delta^j \in \mathbb F_2^m$ its $j$th column.}, it acts as $X[(\delta_A)_{i_a}]$ on the left qubits located at column $j_b$ and as $X[(\delta_B)^{j_b}]$ on the right qubits located on row $i_a$, see Figure \ref{fig:xstab}. Using the $X$-version of Eq.~\eqref{eq:z_operator}, $S^x{(i_a, j_b)}$ is uniquely represented by the pair of matrices, $L=\delta_A^T E_{{i_aj_b}}$ and $R=E_{i_aj_b}\delta_B^T$,  so that
\begin{align*}
   S^x{(i_a, j_b)} := X(\delta_A^T E_{{i_aj_b}}, E_{i_aj_b}\delta_B^T) ,
\end{align*}
where $E_{i_a, j_b}$ is the all-zero $m_a \times n_b$ matrix but for the $(i_a, j_b)$th entry which is $1$. From the characteristic `cross' shape of the stabilisers generators  $S^x(i_a, j_b)$, it follows that if $(G_L, G_R)$ is an $X$-stabiliser for $\cab$, then (i) each column of $G_L$, as a vector in $\ff^{n_a}$, belongs to $\im \delta_A^T$ and (ii) each row of $G_R$, as a vector in $\ff^{m_b}$, belongs to $\im \delta_B$.

Similarly, $Z$-stabiliser generators are indexed by $(j_a, i_b)$ for $1\le j_a \le n_a$ and $1 \le i_b \le m_b$ and $S^z(j_a, i_b)$ is uniquely represented by the pair of matrices:
\begin{align*}
(L,R) =    (E_{j_ai_b}\delta_B, \delta_AE_{j_ai_b}),
\end{align*}
for $E_{j_ai_b}$ of size $n_a \times m_b$, with all entries $0$ but for the $(j_a, i_b)$th entry which is $1$.

The syndrome equation for hypergraph product codes can be derived combining Eq.~\eqref{eq:css_decoding} and the expression for the stabiliser generators. By Eq.~\eqref{eq:css_decoding}, the $i$th bit of the syndrome vector $s \in \ff^{m_x}$ equates the inner product between the $i$th $X$-stabiliser generator, which corresponds to the $i$th row of the matrix $H_X$, and the error vector. In the same way, by reshaping vectors into matrices (see Appendix \ref{sec:reshaping}), the $(i_a, j_b)$th bit of the syndrome matrix $S \in \ff^{m_a \times n_b}$ equates the inner product of the $(i_a, j_b)$th $X$-stabiliser generator and the error matrices $(L, R)$:
\begin{align*}
    (\delta_A)_{i_a} L + R (\delta_B)^{j_b} = S_{{i_a}, {j_b}},
\end{align*}
and by linearity:
\begin{align}
    \label{eq:1_se}\tag{SE}
    \sigma(L, R) \coloneqq \delta_AL + R \delta_B.
\end{align}
It is easy to show that any $Z$-stabiliser has trivial $X$-syndrome, which is equivalent to $X$-stabilisers and $Z$-stabilisers commuting. As a consequence, $\cab$ is a well-defined CSS code.

A minimal generating set of logical $Z$-operators for $\cab$ is given by:
\begin{align}
\label{eq:matrix_logical_z}
\mathcal{L}_z&\coloneqq \mathcal{L}_z^{\mathrm{left}} \cup \mathcal{L}_z^{\mathrm{right}}
\end{align}
where:
\begin{align*}
    \mathcal{L}_z^{\mathrm{left}} \coloneqq \big\{(L, 0) &: L = k_a \cdot e_{j_b}^T,\\
    &\quad k_a \text{ varies among a basis of } \ker \delta_A,\\ 
    &\quad e_{j_b} \text{ varies among a basis of } (\im \delta_B^T)^{\bullet},\\
    &\quad |e_{j_b}|=1\big\},
\end{align*}
and
\begin{align*}
    \mathcal{L}_z^{\mathrm{right}} \coloneqq \big\{(0, R) &: R = e_{i_a} \cdot \bar k_b^T,\\
    &\quad \bar k_b \text{ varies among a basis of } \ker \delta_B^T, \\
    &\quad e_{i_a} \text{ varies among a basis of } (\im \delta_A)^{\bullet},\\
    &\quad |e_{i_a}|=1
    \big\}.
\end{align*}
Here, given a vector space $V \subseteq \ff^n$, $V^{\bullet}$ denotes any space such that $V \oplus V^{\bullet} \simeq \ff^n$. In particular, the space $V^{\bullet}$ is in general different from the orthogonal complement $V^{\bot}$ of the space $V$, see Appendix \ref{app:linear_algebra} for details.
Similarly, a minimal generating set of logical $X$-operators is:
\begin{align}
\label{eq:matrix_logical_x}
    \mathcal L_x &\coloneqq \mathcal L_x^{\mathrm{left}} \cup \mathcal L_x^{\mathrm{right}}
\end{align}
where:
\begin{align*}
    \mathcal{L}_x^{\mathrm{left}} \coloneqq \big\{(L, 0) &: L = e_{j_a} \cdot k_b^T,\\
    &\quad k_b \text{ varies among a basis of } \ker \delta_B,\\
    &\quad e_{j_a} \text{ varies among a basis of } (\im \delta_A^T)^{\bullet},\\
    &\quad |e_{j_a}|=1 \big\},
\end{align*}
and
\begin{align*}
    \mathcal{L}_x^{\mathrm{right}} \coloneqq \big\{(0, R) &: R = \bar k_a \cdot e_{i_b}^T,\\
    &\quad \bar k_a \text{ varies among a basis of } \ker \delta_A^T,\\
    &\quad e_{i_b} \text{ varies among a basis of } (\im \delta_B)^{\bullet},\\
    &\quad |e_{i_b}|=1  \big\}.
\end{align*}
To sum up, the code $\cab$ is a CSS code with parameters $[[n, k, d_x, d_z]]$, where:
\begin{align*}
    n &= n_an_b+m_am_b\\
    k & = (n_a - \mathrm{rk}_a) (n_b - \mathrm{rk_b}) + (m_a - \mathrm{rk}_a) (m_b - \mathrm{rk_b}) \\
    d_x &= \min\{d_a^T, d_b\}\\
    d_z &= \min\{d_a, d_b^T\}
\end{align*}
for $\mathrm{rk}_{\ell} = \rk(\delta_{\ell})$ and $d_{\ell}$ (resp. $d_{\ell}^T$) distance of the classical code with parity check matrix $\delta_{\ell}$ (resp. $\delta_{\ell}^T$), $\ell = A, B$. By convention, we define the distance of the trivial code $\{0\}$ to be $\infty$. In particular, whenever one or both seed matrices (or transpose) are full rank, one of the summands in the expression for $k$ cancel out e.g. if $\delta_A$ or $\delta_B$ have full rank, then $k = (n_a - \mathrm{rk}_a)(n_b - \mathrm{rk}_b)$, $d_x = d_b$ and $d_z=d_a$.

The similarities in structure between general hypergraph product codes $\cab$ and the toric codes (with and without boundaries) should now be clear: the toric code with boundaries (resp.\ without) of lattice size $L$ is just the hypergraph product code $\mathcal{C}(\delta_L, \delta_L)$ where $\delta_L$ is the full-rank $L-1 \times L$ (resp.\ non-full-rank $L \times L$) parity check matrix of the classical $[L, 1, L]$ repetition code, e.g.\, for $L = 3$:
\begin{align}
\label{eq:repetition_code}
 \begin{pmatrix}
 1 & 1 & 0\\
 0 & 1 & 1
 \end{pmatrix}, &&\text{ resp. }&& \begin{pmatrix}
 1 & 1 & 0 \\
 0 & 1 & 1\\
 1 & 0 & 1
 \end{pmatrix}.
\end{align}
Left and right qubits correspond to vertical and horizontal edges; vertices and faces on the square lattice can be indexed in the natural way yielding the same stabiliser indexing of the general hypergraph product codes; string like (resp.\ loop like) logical operators correspond precisely to the left and right logical operators described above which have single column/single row support.

In what follows, we focus on $Z$-errors and their correction. With slight abuse of notation, we will  refer to pair of matrices $(L, R)$ as operators (and vice versa sometimes) where the identification is clear via Eq.~\eqref{eq:z_operator}. The corresponding results for $X$-errors are easily obtained by duality as per any CSS code. More precisely, by swapping the role of $X$ and $Z$ but also the role of rows and columns; alternatively, considering the code $\mathcal{C}(\delta_A^T, \delta_B^T)$, see Appendix \ref{app:hpg_codes}. 
\begin{figure*}[tb]
\centering\subfloat[]{\includegraphics[scale=1]{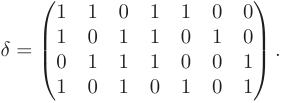}
\label{fig:delta}
}
\vfill
\subfloat[An $X$ stabiliser.]{
\includegraphics[scale=0.45]{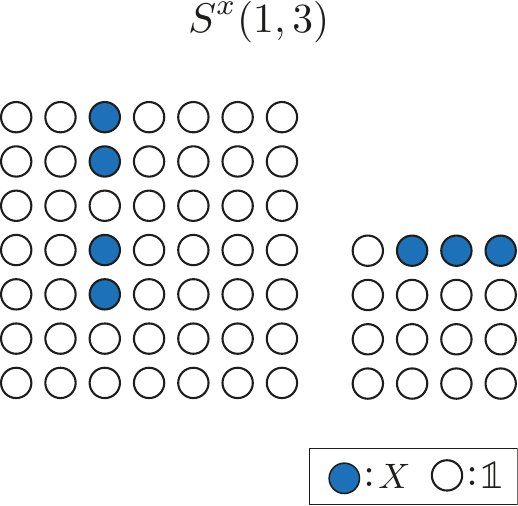}
\label{fig:xstab}
}\hfill
\subfloat[A logical left $X$ operator.]{
\includegraphics[scale=0.45]{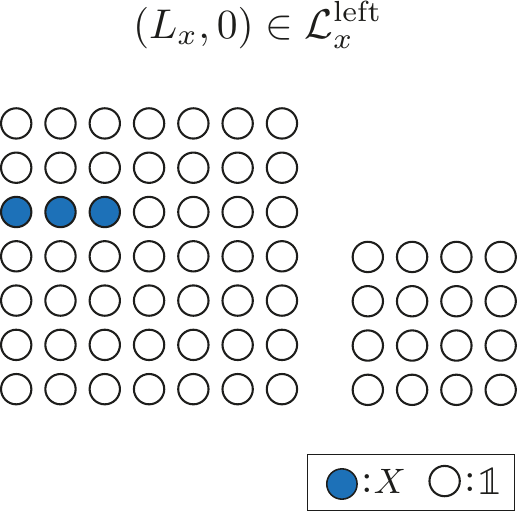}
\label{fig:xleft}
}\hfill
\subfloat[A logical right $X$ operator.]{
\includegraphics[scale=0.45]{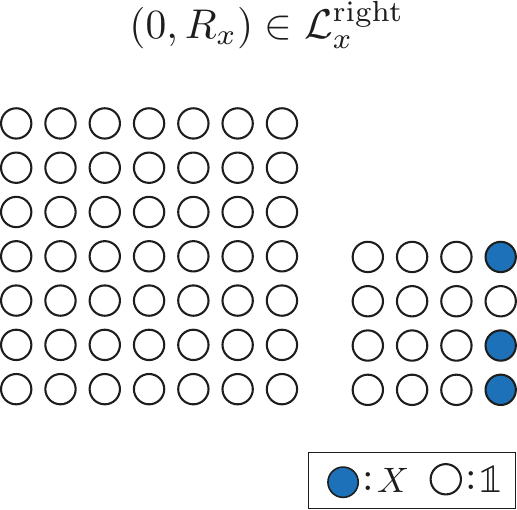}
\label{fig:xright}
}
\vfill
\subfloat[A $Z$ stabiliser.]{
\includegraphics[scale=0.45]{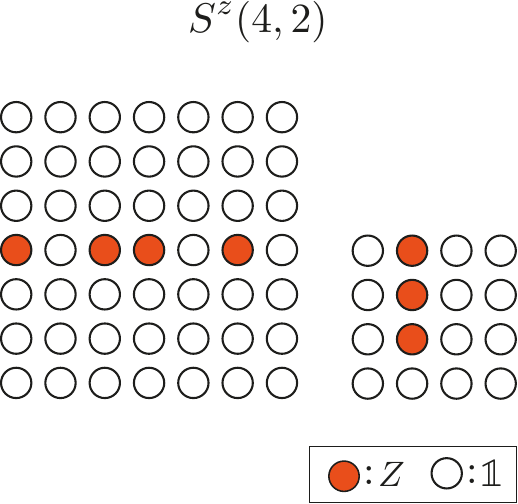}
\label{fig:zstab}
}
\hfill
\subfloat[A logical left $Z$ operator.]{
\includegraphics[scale=0.45]{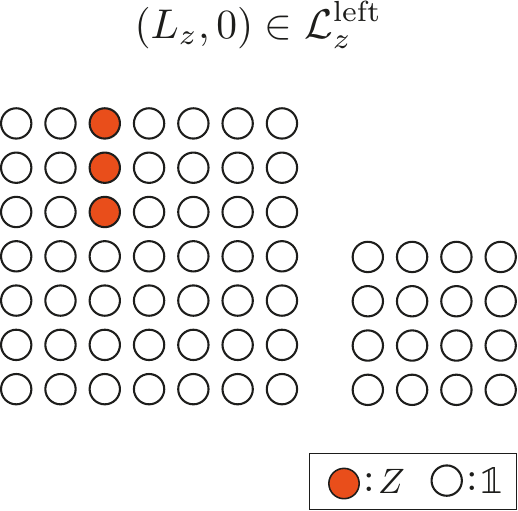}
\label{fig:zleft}
}\hfill
\subfloat[A logical right $Z$ operator.]{
\includegraphics[scale=0.45]{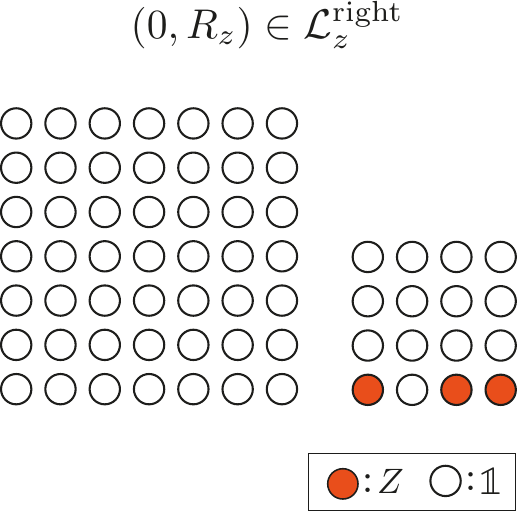}
\label{fig:zright}
}
\caption{A graphical representation of the qubit space of the homological product code $\cab$ where $\delta_A = \delta_B = \delta$ and $\delta$ is a degenerate parity check matrix for the $[7, 4, 3]$ Hamming code reported in (\ref{fig:delta}).
In (\ref{fig:xstab}), \dots, (\ref{fig:zright}), the two grids represent the left and right qubits respectively. One circle is drawn for every physical qubits of $\cab$. There are $7 \times 7$ left qubits and $4 \times 4$ right qubits. The support of an operator $(L, R)$ on $\cab$ is represented by filling the corresponding circle: left qubit at position $(j_a, j_b)$ is filled if and only if $L_{{j_a, j_b}} = 1$; similarly the right qubit at position $(i_a, i_b)$ is filled if and only if $R_{{i_a}, {i_b}} = 1$. The code $\cab$ pictured has parameters $[[65, 17, 3, 3]]$. It has $16$ independent logical left operators and $1$ logical right operators: $|\mathcal{L}_x^{\mathrm{left}}| = |\mathcal{L}_z^{\mathrm{left}}| = 16$ and $|\mathcal{L}_x^{\mathrm{right}}|=|\mathcal{L}_z^{\mathrm{right}}|=1$. 
}
\label{fig:qubits_grid}
\end{figure*}

\section{Results}
\label{sec:results}

Here we present the ReShape decoder. The intuition behind ReShape is that we can look at hypergaph product codes as codes built combining (product) multiple copies of the same classical codes. As such, with due care, we can `decouple' these copies and retrieve the original classical seed codes. 

On a $[[n, k, d]]$ hypergraph product code $\cab$, ReShape works by splitting the decoding problem into $k$ smaller classical decoding problems which can be solved using classical decoding algorithms for the seed matrices. In order to identify the $k$ classical decoding problems, it applies a linear transformation, a change of basis, on the $n$ dimensional codespace of $\cab$, yielding a canonical form for error operators. This canonical form exposes two important features of the codespace: the first one is that logical operators of $\cab$ are naturally partitioned into two sets, of left and right operators; the second is that the weight of each logical operator directly depends on the weight of the classical codewords of the seed codes. By writing an operator in its canonical form, we can immediately assess to which of the two classes it belongs and, via classical decoding, to which logical operator it is closest. Hence, we successfully detect and correct errors. 

In this Section, we first proceed to study the algebraic invariants of the logical operators upon which the canonical form is defined. The correctness of ReShape, and so the proof of Theorem \ref{thm:reshape_success}, strongly relies on the existence of these invariants. We detail the Reshape algorithm in Section \ref{sec:reshape} and discuss its limitations in Section \ref{sec:stochastic}. All the proof of this Section are deferred to Appendix \ref{app:proofs}.
 
\subsection{Invariants}
\label{sec:algebraic_properties}
The characteristic shape of operators on the codespace of $\cab$ and the structure of its stabilisers and logical operators, induces a canonical form for $Z$-operators in $\cab$. More precisely, by combining the construction outlined in Section \ref{sec:main} and the definition of complement of a vector subspace (see Appendix \ref{app:linear_algebra}) we have proven the following:
\begin{proposition}[Canonical form]
\label{prop:canonical_form}
Let $(L, R)$  be a $Z$-operator on the codespace of $\cab$. For a vector space $V \subseteq \ff^n$, we denote by $V^{\bullet}$ any space such that $V \oplus V^{\bullet} \simeq \ff^n$, (see Appendix \ref{app:linear_algebra}). Then, for the operator $(L, R)$, the left part $L$ can be expressed as a sum of a \emph{free part} $M_L$ and a \emph{logical part} $O_L$ such that every row of $M_L$ belongs to $\im \delta_B^T$ and every row of $O_L$ belongs to $(\im \delta_B^T)^{\bullet}$. Similarly, the right part $R$ can be expressed as a sum of a free part $M_R$ and a logical part $O_R$ such that every column of $M_R$ belongs to $\im \delta_A$ and every column of $O_R$ belongs to $(\im \delta_A)^{\bullet}$. Hence, for $(L, R)$ holds:
\begin{align}
\label{eq:cf}
\tag{CF}
(L, R) = (M_L + O_L, M_R + O_R).
\end{align}
We refer to the writing given by Eq.~\eqref{eq:cf} as \emph{canonical form} of the operator $(L,R)$.
\end{proposition}
Crucially, as we detail in Appendix \ref{app:proofs}, it is always possible to `move' the support of the free part of an operator from the left qubits to the right qubits and vice versa, by adding stabilisers. Opposite is the situation for the logical part: the support of the logical part of an operator cannot be moved from the left to the right qubits without changing its homology class. These two observations justify the name free and logical part in the canonical form of a $Z$-operator on $\cab$. We refer to Figure \ref{fig:qubits_grid} and \ref{fig:a_z_operator} for a visual representation of the canonical form of a $Z$-operator on $\cab$. In Figures \ref{fig:xstab} and \ref{fig:zstab} we see stabiliser operators in their canonical form: their free part has support pictured, their logical part is $0$. In Figures \ref{fig:xleft}, \ref{fig:xright}, \ref{fig:zleft}, \ref{fig:zright} we see logical operators in their canonical form: their free part is $0$, whilst their logical part, pictured, has support contained in either a line or a column of one of the two grids of qubits. In Figure \ref{fig:a_z_operator} we see a $Z$-operator whose free and logical part are both non trivial.

Given a $Z$-operator $(L, R)$ we define its \emph{row-column weight} as
\begin{defin}\label{definition:cr_weight}
Let $(L,\,R)$ be any $Z$-operator on the physical qubits of the code $\cab$. Its row-column weight is the integer pair:
\begin{equation*}
\wtrc(L,\,R) \coloneqq (\#\row(L),\, \#\col(R))
\end{equation*}
where
\begin{align*}
    \row(L) &\coloneqq \{L_i \text{ row of }L \,: L_i \neq 0\},\\
    \col(R) &\coloneqq \{R^j \text{ column of }R \,: R^j \neq 0\},
\end{align*}
and the hash symbol $\#$ denotes the cardinality of a set.  
\end{defin}

The primary significance of this novel notion of weight is explained by Proposition \ref{prop:logical-op}, which also represents a key result towards the construction of the ReShape decoder.

\begin{proposition}\label{prop:logical-op}
If $(L,\,R)$ is a non-trivial logical $Z$-operator of $\cab$ then either $\#\row(L) \ge d_a$ or $\#\col(R)\ge d_b^T$ (or both). 
\end{proposition}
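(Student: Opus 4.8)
The plan is to exploit the canonical form (CF) together with the syndrome equation and the explicit description of logical representatives. Recall that a $Z$-operator $(L,R)$ is a logical operator precisely when it has trivial $X$-syndrome, i.e.\ $\delta_A L + R\delta_B = 0$ by Eq.~\eqref{eq:1_se}, and it is non-trivial when it is not an $X$-stabiliser, i.e.\ not of the form $(G_L,G_R)$ with every column of $G_L$ in $\im\delta_A^T$ and every row of $G_R$ in $\im\delta_B^T$. Writing $(L,R)$ in canonical form $(M_L+O_L, M_R+O_R)$, the first step is to argue that the logical homology class of $(L,R)$ is already carried by its logical part, so one may assume the operator has been reduced to $(O_L, O_R)$ where every row of $O_L$ lies in $(\im\delta_B^T)^\perp$ and every column of $O_R$ lies in $(\im\delta_A)^\perp$. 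Since such a reduction only removes support (or at worst leaves $\#\row$ and $\#\col$ unchanged in the relevant blocks), it suffices to prove the bound for the reduced operator, which then transfers back because $\#\row(L)\ge\#\row(O_L)$ in the coordinates that matter — this monotonicity claim is one of the technical points to nail down carefully.

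Next I would use the syndrome equation $\delta_A O_L = O_R\delta_B$ (the free parts satisfy $\delta_A M_L = M_R\delta_B$ separately, or can be absorbed) to transfer structure between the two sides. The key observation is that each \emph{row} of $O_L$ is forced to be a codeword-like object: because every row of $O_L$ lies in the complement of $\im\delta_B^T$, and the columns of $O_R$ live in the complement of $\im\delta_A$, the only way $\delta_A O_L + O_R\delta_B = 0$ can hold with these complementarity constraints is if, effectively, the columns of $O_L$ lie in $\ker\delta_A$ and the rows of $O_R$ lie in $\ker\delta_B^T$. This mirrors the structure of the generating sets $\mathcal L_z^{\mathrm{left}}$ and $\mathcal L_z^{\mathrm{right}}$ in Eqs.~\eqref{eq:matrix_logical_z}. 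Once this is established, non-triviality of $(L,R)$ means that either $O_L\neq 0$ or $O_R\neq 0$ (after the reduction), and I would split into these two cases.

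In the case $O_L\neq 0$: pick any nonzero row $L_i$ of $O_L$ — wait, the better move is to pick a nonzero \emph{column}, which lies in $\ker\delta_A$ and is therefore a nonzero classical codeword of the code with parity check $\delta_A$, hence has Hamming weight at least $d_a$. A nonzero column with $\ge d_a$ nonzero entries means at least $d_a$ distinct rows of $O_L$ are nonzero, so $\#\row(O_L)\ge d_a$, and by the monotonicity step $\#\row(L)\ge d_a$. Symmetrically, in the case $O_R\neq 0$, a nonzero row of $O_R$ lies in $\ker\delta_B^T$, hence is a nonzero codeword of the code with parity check $\delta_B^T$, of weight $\ge d_b^T$; this forces $\#\col(O_R)\ge d_b^T$ and thus $\#\col(R)\ge d_b^T$. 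Since non-triviality guarantees at least one of the two cases occurs, the disjunction in the statement follows.

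The main obstacle I anticipate is the bookkeeping around the reduction to the logical part and the complementarity constraints: one must verify that absorbing the free part into stabilisers genuinely preserves the homology class \emph{and} does not inflate $\#\row(L)$ or $\#\col(R)$ beyond what the logical part contributes, and one must rule out cancellations between $M$ and $O$ parts that could hide a large logical support behind a small visible one. This is where the direct-sum decomposition $\ff^{n_b} \simeq \im\delta_B^T \oplus (\im\delta_B^T)^\perp$ (and its transpose analogue) does the real work: projecting onto the complement is a linear map that kills stabilisers exactly, so the logical part is well-defined independently of representative, and its rows/columns being nonzero is a homology invariant. Making that projection argument precise — essentially showing $\row$ and $\col$ cardinalities of the projected operator are lower bounds for the originals — is the crux, after which the two cases above are short appeals to the classical distance.
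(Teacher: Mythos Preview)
Your approach is correct and genuinely different from the paper's. The paper argues via anti-commutation: a non-trivial $(L,R)$ must anti-commute with some generator $(f\otimes k,0)\in\hat{\mathcal L}_x^{\mathrm{left}}$ (or its right analogue), so $f^TLk=1$; then Lemma~\ref{lemma:ker} gives $L=K_A+U_L\delta_B$ with columns of $K_A$ in $\ker\delta_A$, whence $Lk=K_Ak$ is a nonzero element of $\ker\delta_A$, forcing $|Lk|\ge d_a$ and therefore $\#\row(L)\ge d_a$. You instead work entirely through the canonical form and the syndrome equation, never invoking anti-commutation or Lemma~\ref{lemma:ker}: you split $\delta_AL+R\delta_B=0$ using the direct-sum structure to obtain $\delta_AO_L=0$ and $O_R\delta_B=0$ directly, and then read off the classical distance bound from a nonzero column of $O_L$ (or row of $O_R$). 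Your route is more structural and in fact yields Corollary~\ref{cor:logical} simultaneously; the paper's route is shorter because anti-commutation hands you the nonzero vector $Lk\in\ker\delta_A$ for free.

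Two places deserve tightening. First, your claim that the syndrome equation splits as $\delta_A M_L=M_R\delta_B$ and $\delta_A O_L+O_R\delta_B=0$ is not automatic: you must use the direct sums \emph{twice}, once column-wise (columns of $O_R\delta_B$ lie in $(\im\delta_A)^\perp$ while everything else lies in $\im\delta_A$, giving $O_R\delta_B=0$) and once row-wise (rows of $\delta_AO_L$ lie in $(\im\delta_B^T)^\perp$ while everything else lies in $\im\delta_B^T$, giving $\delta_AO_L=0$). Second, and more substantively, the implication ``$(L,R)$ non-trivial $\Rightarrow O_L\neq0$ or $O_R\neq0$'' is not merely the invariance of the logical part under stabiliser addition: you need the converse, that $(M_L,M_R)$ is itself a stabiliser whenever it lies in $\ker\partial_0$. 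This holds, but requires an argument --- either the anti-commutation check (any such $(M_L,M_R)$ commutes with every generator in $\hat{\mathcal L}_x$) or a comparison with the explicit basis $\mathcal L_z$ --- and is precisely where the paper's route short-circuits the bookkeeping.
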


Corollary \ref{cor:logical} below further specifies the structure of logical $Z$-operators and it is easily derived from the proof of Proposition \ref{prop:logical-op}, which is deferred to Appendix \ref{app:proofs}.

\begin{corollary}
\label{cor:logical}
If $(L, R)$ is a non-trivial logical $Z$-operator on $\cab$, at least one of the following hold:
\begin{enumerate}[label=(\roman*)]
\item $L$ has at least $d_a$ rows which are not in $\im \delta^T_B$ when seen as vectors in $\ff^{n_b}$.
\item $R$ has at least $d_b^T$ columns which are not in $\im \delta_A$ when seen as vectors of $\ff^{m_a}$. 
\end{enumerate}
\end{corollary}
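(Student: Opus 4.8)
The plan is to prove this directly from the canonical form~\eqref{eq:cf}, following the same route as the proof of Proposition~\ref{prop:logical-op} but bookkeeping the \emph{logical part}. Write $(L,R) = (M_L + O_L,\, M_R + O_R)$ in canonical form. Since $\ff^{n_b} = \im \delta_B^T \oplus (\im \delta_B^T)^{\bot}$ is a direct sum, a row of $L$ lies outside $\im \delta_B^T$ exactly when the matching row of $O_L$ is nonzero, and likewise a column of $R$ lies outside $\im \delta_A$ exactly when the matching column of $O_R$ is nonzero. So claims (i) and (ii) are literally $\#\row(O_L)\ge d_a$ and $\#\col(O_R)\ge d_b^T$, and, since this inclusion of supports also gives $\#\row(O_L)\le \#\row(L)$ and $\#\col(O_R)\le \#\col(R)$, proving them recovers Proposition~\ref{prop:logical-op} as well. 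It therefore suffices to show that a non-trivial logical $Z$-operator has $O_L\ne 0$ or $O_R\ne 0$, and that whichever of the two is nonzero carries enough nonzero rows or columns.

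First I would read off the constraints that the syndrome equation imposes on $O_L$ and $O_R$. Substituting the canonical form into~\eqref{eq:1_se} gives $\sigma(L,R) = (\delta_A M_L + R\delta_B) + \delta_A O_L$, where every row of the first bracket lies in $\im \delta_B^T$ while every row of $\delta_A O_L$ lies in $(\im \delta_B^T)^{\bot}$; as the decomposition $\ff^{n_b}=\im \delta_B^T\oplus(\im \delta_B^T)^{\bot}$ is direct and $\sigma(L,R)=0$, each bracket vanishes, in particular $\delta_A O_L = 0$, i.e.\ every column of $O_L$ is a vector of $\ker \delta_A$. Running the identical argument column-wise against $\ff^{m_a}=\im \delta_A\oplus(\im \delta_A)^{\bot}$ yields $O_R\delta_B = 0$, i.e.\ every row of $O_R$, viewed as a vector of $\ff^{m_b}$, lies in $\ker \delta_B^T$.

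The remainder is a one-line minimum-distance count. If $O_L\ne 0$ it has a nonzero column, which by the previous step is a nonzero element of $\ker \delta_A$ and hence has Hamming weight at least $d_a$; that column then has at least $d_a$ nonzero entries, which sit in $d_a$ distinct rows of $O_L$, each necessarily nonzero, so $\#\row(O_L)\ge d_a$ and (i) holds. By the mirror-image argument, if $O_R\ne 0$ it has a nonzero row lying in $\ker \delta_B^T\setminus\{0\}$, of weight at least $d_b^T$, which forces at least $d_b^T$ columns of $O_R$ to be nonzero and gives (ii). It remains to exclude $O_L=O_R=0$: then $(L,R)=(M_L,M_R)$ is a syndrome-free operator whose left rows all lie in $\im \delta_B^T$ and whose right columns all lie in $\im \delta_A$, and here I would invoke the fact---the ``moving'' of the free part, established by the homological analysis of $\cab$ in App.~\ref{app:hpg_codes}---that such an operator is a stabiliser, equivalently that $(L,R)$ is homologically equivalent to $(O_L,O_R)$, contradicting non-triviality.

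The genuinely load-bearing step is this last reduction. The splitting into $\im{}\oplus{}(\cdot)^{\bot}$ and the distance bound are routine; the subtle point is that an operator that is merely a cycle and whose left rows/right columns sit in $\im \delta_B^T$ and $\im \delta_A$ really is a stabiliser. The obstruction is that writing $M_L = G\delta_B$ (possible because the rows of $M_L$ are in $\im \delta_B^T$) and using $\delta_A M_L + M_R\delta_B = 0$ only yields $(\delta_A G + M_R)\delta_B = 0$, leaving a residual term whose columns lie in $\im \delta_A$ and whose rows lie in $\ker \delta_B^T$; showing this residue can be absorbed---by a careful choice of $G$ together with the complements $(\im \delta_B^T)^{\bot}$ and $(\im \delta_A)^{\bot}$ used to define the canonical form---is exactly where the chain-complex structure of $\cab$ is needed. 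With that in hand, the corollary, and with it Proposition~\ref{prop:logical-op}, follows.
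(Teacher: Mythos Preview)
Your argument is correct and follows a route that differs from the paper's. The paper derives Corollary~\ref{cor:logical} by reusing the anti-commutation machinery of Proposition~\ref{prop:logical-op}: it picks an $X$-logical generator $(f\otimes k,0)\in\hat{\mathcal L}_x^{\mathrm{left}}$ (or its right analogue) that anti-commutes with $(L,R)$, then uses Lemma~\ref{lemma:ker} together with the canonical form to show $O_Lk = Lk$ is a nonzero element of $\ker\delta_A$, whence $|O_Lk|\ge d_a$ and $O_L$ has at least $d_a$ nonzero rows. You instead stay entirely on the $Z$-side. Your row-wise splitting of the syndrome equation along $\im\delta_B^T\oplus(\im\delta_B^T)^{\bot}$ gives $\delta_A O_L=0$ directly---so \emph{every} column of $O_L$ lies in $\ker\delta_A$, a cleaner conclusion than what the paper extracts from Lemma~\ref{lemma:ker}---and then any nonzero column suffices. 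The cost is that you must separately exclude $O_L=O_R=0$; this is dual to the paper's use of $\mathcal L_x$ (they need an anti-commuting $X$-logical, you need that $(M_L,M_R)$ is always a stabiliser). The ``moving'' discussion you cite does not quite close this gap by itself, but the explicit generating set $\mathcal L_z$ of Eq.~\eqref{eq:matrix_logical_z} does: since every left generator has rows in $(\im\delta_B^T)^{\bot}$ and every right generator has columns in $(\im\delta_A)^{\bot}$, decomposing $(L,R)$ as stabiliser plus $\mathcal L_z$-combination identifies $(O_L,O_R)$ with the logical part and $(M_L,M_R)$ with the stabiliser part. Your approach packages Proposition~\ref{prop:logical-op} and its corollary into a single argument, whereas the paper proves them in sequence.
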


Proposition \ref{prop:logical-op} and Corollary \ref{cor:logical} naturally yield 
\begin{defin}\label{definition:logical_weight}
Let $(L, R)$ be a $Z$-operator on $\cab$. Its logical row-column weight is the integer pair 
\begin{align*}
\wtrcl(L, R) &:= (\#\rowl(L), \#\coll(R))
\end{align*}
where
\begin{align*}
\rowl(L)& := \{L_i \text{ row of }L \,: L_i \not \in \im \delta_B^T\},\\
\coll (R)& := \{R^j \text{ column of }R \,: R^j \not \in \im \delta_A \}.
\end{align*}
Equivalently, if $(L, R)$ has canonical form given by
\begin{equation*}
  (L, R)=(O_L + M_L, O_R + M_R),
\end{equation*}
for its logical row-column weight holds:
\begin{equation*}
    \wtrcl(L, R) = \wtrc(O_L, O_R).
\end{equation*}
\end{defin}

\begin{figure}
\centering
\subfloat[A $Z$ operator.]{\includegraphics[scale =0.45]{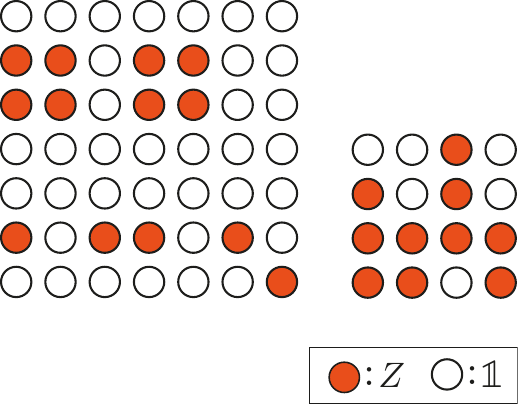}
\label{fig:entire_op}}\hfill
\subfloat[Its free part.]{\includegraphics[scale=0.45]{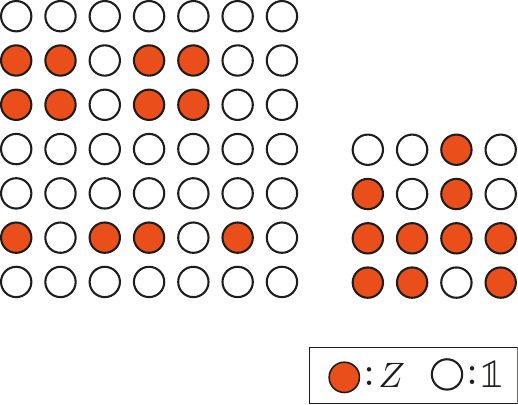}
\label{fig:free_part}
}\hfill
\subfloat[Its logical part.]{\includegraphics[scale=0.45]{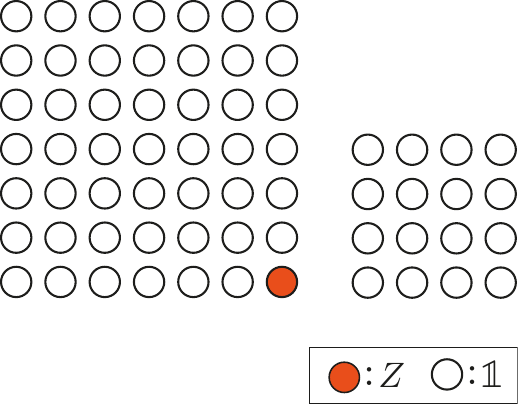}
\label{fig:logical_part}}
\caption{A graphical representation of a $Z$ operator on the physical qubits of the code $\cab$ also represented in Figure \ref{fig:qubits_grid}. The filled circles in (\ref{fig:entire_op}) represent the support of the operator: its Hamming weight is $23$ while its row-column weight is $(4, 4)$, see Definition \ref{definition:cr_weight}. The operator in (\ref{fig:entire_op}) can be written as a sum of its \emph{free} and its \emph{logical} parts represented in (\ref{fig:free_part}) and (\ref{fig:logical_part}), see Proposition (\ref{prop:canonical_form}). As per Definition \ref{definition:logical_weight} and figure (\ref{fig:logical_part}), the logical row-column weight of the operator in (\ref{fig:entire_op}) is $(1, 0)$. }
\label{fig:a_z_operator}
\end{figure}

The pivotal property of the logical row-column weight is expressed by Proposition \ref{prop:invariant}.
 
\begin{proposition}
\label{prop:invariant}
The logical row-column weight of a $Z$-operator on $\cab$ is an invariant of its homology class.
\end{proposition}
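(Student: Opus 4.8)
The claim is that if $(L,R)$ and $(L',R')$ are homologically equivalent $Z$-operators on $\cab$, then $\wtrcl(L,R) = \wtrcl(L',R')$, i.e.\ the pair $(\#\rowl(L),\#\coll(R))$ depends only on the homology class. The natural strategy is to show that $\wtrcl$ is unchanged (componentwise) under multiplication by each type of generator of the relevant group — that is, by each $X$-stabiliser generator of $\cab$. (We work on the $Z$-operator side, so "homology class" means the coset modulo $Z$-stabilisers; but since the logical row-column weight is only interesting for logical operators, the cleanest formulation is invariance under adding a $Z$-stabiliser.) Wait — re-reading, the congruence that matters is $[P]=[Q]$, i.e.\ differing by a $Z$-stabiliser $(G_L,G_R)$ with every row of $G_L$ in $\im\delta_B^T$ and — from the cross shape of $S^z(j_a,i_b) = (E_{j_ai_b}\delta_B,\ \delta_A E_{j_ai_b})$ — every column of $G_R$ in $\im\delta_A$. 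So the first and main step is: given the canonical form $(L,R) = (O_L+M_L,\ O_R+M_R)$, adding a $Z$-stabiliser $(G_L,G_R)$ produces $(O_L + (M_L+G_L),\ O_R + (M_R+G_R))$, and since $M_L+G_L$ still has all rows in $\im\delta_B^T$ and $M_R+G_R$ still has all columns in $\im\delta_A$, by uniqueness of the canonical form decomposition the logical parts $O_L,O_R$ are unchanged. Then $\wtrcl(L,R) = \wtrc(O_L,O_R)$ is manifestly invariant.

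The crux is therefore the \emph{uniqueness} of the canonical-form splitting: if $(O_L+M_L, O_R+M_R) = (O_L'+M_L', O_R'+M_R')$ with all rows of $M_L,M_L'$ in $\im\delta_B^T$, all rows of $O_L,O_L'$ in $(\im\delta_B^T)^\bot$, and the analogous column conditions on the right parts, then $O_L=O_L'$, $O_R=O_R'$. This follows row-by-row (resp.\ column-by-column) from the defining property of a complement: $\ff^{n_b} = \im\delta_B^T \oplus (\im\delta_B^T)^\bot$ forces the decomposition $v = v_{\mathrm{free}} + v_{\mathrm{log}}$ of each row vector to be unique, so $O_L$ is determined by $L$ alone — indeed $\rowl(L)$ is exactly the set of rows of $L$ with nonzero $(\im\delta_B^T)^\bot$-component, which is well-defined once the complement is fixed. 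Hence I would first record this uniqueness (citing App.~\ref{app:linear_algebra} for the complement machinery), then observe that $Z$-stabiliser generators have left part with rows in $\im\delta_B^T$ and right part with columns in $\im\delta_A$ — i.e.\ they are "purely free" — so adding any $Z$-stabiliser only modifies $M_L,M_R$ and leaves $O_L,O_R$ (and thus $\wtrcl$) fixed.

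One subtlety worth flagging: a general element of the $Z$-stabiliser \emph{group} is a sum of generators $S^z(j_a,i_b)$, and one must check the "free" property is closed under such sums — which is immediate since $\im\delta_B^T$ and $\im\delta_A$ are subspaces, so any $\ff_2$-linear combination of generators again has left-rows in $\im\delta_B^T$ and right-columns in $\im\delta_A$. I expect this to be the only place any real argument is needed; the rest is bookkeeping with the canonical form. The main obstacle, such as it is, is making the uniqueness statement airtight — in particular being careful that "$(\im\delta_B^T)^\bot$" denotes a \emph{fixed} chosen complement (the proposition is stated relative to such a choice), so that $O_L$ is genuinely a function of $L$; once that is pinned down, invariance of $\wtrcl$ under the homology relation is a one-line consequence.
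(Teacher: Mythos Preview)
Your proposal is correct and follows essentially the same approach as the paper: both identify that any $Z$-stabiliser $(G_L,G_R)=(U\delta_B,\delta_A U)$ has all left-rows in $\im\delta_B^T$ and all right-columns in $\im\delta_A$, and then use the direct-sum decomposition $\ff^{n_b}=\im\delta_B^T\oplus(\im\delta_B^T)^\bot$ (and its analogue on the right) to conclude that adding a stabiliser can only alter the free parts $M_L,M_R$, leaving $O_L,O_R$---and hence $\wtrcl$---fixed. Your explicit emphasis on the uniqueness of the canonical-form splitting is exactly the point the paper phrases as ``it is not possible to delete non-zero rows of $O_L$ via stabiliser addition'', and your observation that this in fact pins down $O_L$ itself (not just its row count) matches the stronger statement the paper records immediately after its proof.
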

Proposition \ref{prop:invariant} not only justifies the introduction of the notion of logical row-column weight but also constitutes the core resource upon which we prove the correctness of the ReShape decoder, which we now introduce.

\subsection{The ReShape decoder} 
\label{sec:reshape}
An hypergraph product code $\cab$ is a CSS code and as such the decoding for $X$ and $Z$ error can be treated separately but in a symmetric way. Here we focus on $Z$-errors and therefore we measure a generating set of $X$-stabilisers.  The $Z$-error decoding problem for $\cab$ can be stated as: given a $m_a \times n_b$ syndrome matrix $S$, find a valid and correct solution $(\tilde L, \tilde R)$ to the equation:
\begin{align}\label{eq:se}
S=\sigma(L,\,R)\coloneqq \delta_A L + R \delta_B, \tag{SE}
\end{align}
where $(\tilde L, \tilde R)$ is \emph{valid} if $\sigma(\tilde L, \tilde R) = S$ and it is \emph{correct} if it belongs to the homology class of the minimum weight operator with syndrome $S$. Crucially, finding \emph{a valid} solution $(L, R)$ to Eq.~\eqref{eq:1_se} is always possible by solving the linear system of equation where the parity check matrix of $X$ stabilisers is the matrix of coefficients and the syndrome $S$ is the constant term. The difficulties arise if we are interest in finding \emph{a correct} solution to Eq.~\eqref{eq:1_se}.

The ReShape decoder for $Z$-errors is build upon two classical minimum weight decoding algorithms: $\mathscr D_{\delta_A}$ and $\mathscr D_{\delta_B^T}$. By this we mean that (i) the algorithms $\mathscr D_{\delta_A}$ and $\mathscr D_{\delta_B^T}$ are optimal decoders for the classical linear code with parity check matrix $\delta_A$ and $\delta_B^T$ respectively, and (ii) they solve the classical decoding problem of Eq.~\eqref{eq:classical_decoding} for errors of weight up to $(d_a-1)/2$ and $(d_b^T-1)/2$ respectively. 
Reshape takes as input $\mathscr{D}_{\delta_A}$, $\mathscr{D}_{\delta_B^T}$, a syndrome matrix $S$ and a valid solution $(L,\,R)$ of the Syndrome Equation \eqref{eq:se}: $\sigma(L, R) = S$. Recall that a valid solution $(L, R)$ can always be efficiently found either solving the associated linear system or querying a lookup table. It outputs a correct solution of \eqref{eq:se}: an operator $(\tilde L, \tilde R)$ homologically equivalent to the minimum weight operator $(L_{\min}, R_{\min})$ with syndrome $S$.

ReShape (Algorithm \ref{algo:reshape}) works separately on the left part $L$ and on the right part $R$ of the operator $(L,\,R)$ and in fact it could be run in parallel (lines \ref{alg:for1_left}-\ref{alg:for2_left_end} and lines \ref{alg:for1_right}-\ref{alg:for2_right_end}). Starting from a valid solution $(L,\,R)$, it minimises its logical row-column weight by minimizing $\#\rowl(L)$ (lines \ref{alg:for1_left}-\ref{alg:for2_left_end}) first and $\#\coll(R)$ after (lines \ref{alg:for1_right}-\ref{alg:for2_right_end}). Because the logical row-column weight is an homology invariant for $Z$-operators (Proposition \ref{prop:invariant}) and ReShape minimises it, this suffices to assure that ReShape is correct, as stated in Proposition \ref{prop:decoder}.
\begin{algorithm}
\caption{ReShape decoder for $Z$-errors.}
\begin{algorithmic}[1]
\Require Classical decoder $\mathscr{D}_{\delta_A}$ and $\mathscr{D}_{\delta_B^T}$. Syndrome matrix $S$ and operator $(L,\,R)$ on $\cab$ s.t. $\sigma(L,\,R) = S$.
\newline
\Ensure Operator $(\tilde L, \tilde R)$ on $\cab$ s.t. $\sigma(\tilde L, \tilde R) = S$ and $[\tilde L, \tilde R] = [L_{\min}, R_{\min}]$, where $(L_{\min}, R_{\min})$ is a minimum weight operator with syndrome $S$.
\newline
\ForAll{$L_i$ rows of $L$}\label{alg:for1_left}
    \State{$\mathrm{Split}$: $L_i = m_i+ \mu_i \in \im \delta_B^T \oplus (\im \delta_B^T)^\bullet$, \quad as in~\eqref{eq:nice_basis}}
\EndFor\label{alg:for1_left_end}
\State{$M_L \gets$ matrix whose rows are $m_i$}
\State{$O_L \gets$ matrix whose rows are $\mu_i$}
\ForAll{$O_L^j$ columns of $O_L$}\label{alg:for2_left}
\State{$\mathrm{Decode}$: $\rho^j = \mathscr{D}_{\delta_A}(O_L^j)$}\label{alg:decode}
\EndFor\label{alg:for2_left_endfor}
\State{$\tilde L \gets$ matrix whose columns are $\rho^j$} \label{alg:create_l}
\State{$\tilde L \gets \tilde L + O_L + M_L$}\label{alg:for2_left_end}
\ForAll{$R^j$ columns of $R$} \label{alg:for1_right}
\State{$\mathrm{Split}$: $R^j = m^j+ \mu^j \in \im \delta_A \oplus (\im \delta_A)^\bullet $}, \quad as in \eqref{eq:nice_basis}
\EndFor
\State{$M_R \gets$ matrix whose columns are $m^j$}
\State{$O_R \gets$ matrix whose columns are $\mu^j$}
\ForAll{${(O_{R})}_i$ rows of $O_R$}\label{alg:for2_right}
\State{$\mathrm{Decode}$: $\rho_i = \mathscr{D}_{\delta_B^T}({(O_{R})}_i)$}
\EndFor
\State{$\tilde R \gets$ matrix whose rows are $\rho_i$}
\State{$\tilde R \gets \tilde R + R_L + M_R$}\label{alg:for2_right_end}
\State\Return{$(\tilde L, \tilde R)$}
\end{algorithmic}
\label{algo:reshape}
\end{algorithm}
ReShape works on the left part $L$ of the inputted valid solution $(L, R)$ (lines \ref{alg:for1_left}-\ref{alg:for2_left_end}) into two steps: Decode and Split. Each of these two steps exploits a characteristic feature of the $Z$-operators on the codespace of $\cab$: 
\begin{enumerate}[label=(\roman*)]
    \item Split step: a $Z$-stabilizer $(G_L, G_R)$ has left part $G_L$ such that every row is in the image of $\delta_B^T$;
    \item Decode step: a logical $Z$-operator which acts non-trivially on the left qubits has a representative $(L_z, R_z)$ such that at least one column of $L_z$ is in $\ker \delta_A\setminus\{0\}$.
\end{enumerate}
The Split and Decode steps are similarly performed on the right part $R$, as specified in lines \ref{alg:for1_right} - \ref{alg:for2_right_end} of the pseudocode in Algorithm \ref{algo:reshape}. Again with reference to the left part as guide case, we now describe the Split and Decode steps in details and specify their computational cost. By extending this analysis to the right part, and thanks to Proposition \ref{prop:decoder}, Theorem \ref{thm:reshape_success} is proved. 

Let $(L,\,R)$ be any valid solution of \eqref{eq:se} given in input to ReShape.

(i) Split. First, in lines \ref{alg:for1_left}-\ref{alg:for1_left_end}, $L$ is written in its canonical form with respect to the basis described by Eq.~\eqref{eq:nice_basis}:
\begin{align*}
    L = M_L + O_L.
\end{align*}
This operation has the cost of a change of basis over the vector space $\ff^{n_b}$, namely from the canonical basis to the basis described by Eq.~\eqref{eq:nice_basis}. A change of basis over a vector space is a linear operation that correspond to a multiplication by an invertible square matrix. Since we are interested in computing the image of this linear transformation for each of the $n_a$ column vectors of $L$, this amount to the multiplication of an $n_a \times n_b$ and a $n_b \times n_b$ matrix. To sum up, the Split step of ReShape has cost $O(n_a n_b^2)$.

(ii) Decode. The second step performed by ReShape (lines \ref{alg:for2_left}-\ref{alg:for2_left_end}) aims to minimise the logical row-column weight of $(L,\,R)$ by looking at non-homologically equivalent operators:
\begin{align*}
    (L,\,R) + (L_z, 0),
\end{align*}
as $L_z$ varies in $\mathcal L_z^{\mathrm{left}}$. More precisely, ReShape exploits the canonical form of $L$ computed at the previous step and scans through all the columns of its logical part $O_L$. By construction, any row $(O_L)_i$ of $O_L$ belongs to the complement of $\im \delta_B^T$. For this reason, $(O_L)_i$ can be written as the linear combination of $k_b = n_b - \rk(\delta_B)$ unit vectors in $\ff^{n_b}$ which does not belong to $\im \delta_B^T$ i.e.\ $k_b$ unit vectors which span $(\im \delta_B^T)^{\bullet}$, see Appendix \ref{app:linear_algebra}. Importantly, when we stack these row vectors $(O_L)_i$ all together and consider the columns of the matrix $O_L$, we observe that $O_L$ cannot have more than $k_b$ non-zero columns. Each non-zero column of $O_L$ is then treated as if it corresponded to a code-word of the classical code with parity check matrix $\delta_A$ (plus eventually a noise vector) and decoded individually using the classical minimum weight decoder $\mathscr D_{\delta_A}$ (line \ref{alg:decode}):
\begin{align}
\label{eq:decode_min}
    \mathscr{D}_{\delta_A}(O_L^j) = \rho^j \Longleftrightarrow \rho^j \ \in \arg\min_{k \in \ker \delta_A }(|k + O_L^j|).
\end{align}
If the computational cost of the classical decoder $\mathscr{D}_{\delta_A}$ is $O(c_a)$, the computational cost of the second step of ReShape is $O(k_b c_a)$.

The Split and Decode steps described for the left part are replicated, with opportune modifications, for the right part. To be exact, if one or both $\delta_A$ and $\delta_B$ are full rank, then the right part does not encode any logical operator so the algorithm terminates\footnote{In fact, as per Eq.~(\ref{eq:matrix_logical_z}) and Eq.~ (\ref{eq:matrix_logical_x}), if $\rk(\delta_A) = m_a$ or $\rk(\delta_B) = m_b$, then $\ker \delta_A^T = (\im \delta_A)^{\bullet} = \{0\}$ or $\ker \delta_B^T = (\im \delta_B)^{\bullet} = \{0\}$ and so $\mathcal{L}^{\mathrm{right}}_x$ = $\mathcal{L}^{\mathrm{right}}_z$.}.

Proposition \ref{prop:decoder} below ensures that the recovery operator $(\tilde L, \tilde R)$ found by ReShape is a correct solution of \eqref{eq:se}, as long as the classical decoders $\mathscr D_{\delta_A}$ and $\mathscr{D}_{\delta_B^T}$ succeed. 
\begin{proposition}\label{prop:decoder}
Let $S$ be an $X$-syndrome matrix for $\cab$ and $(L, R)$ any valid solution to the Syndrome Equation:
\begin{align*}
    \tag{\ref{eq:se}}\sigma(L,\,R)=S.
\end{align*}
Suppose that the minimum weight operator $(L_{\min},\, R_{\min})$ with syndrome $S$ has $(d_a/2, d_b^T/2)$-bounded logical row-column weight i.e.\:
\begin{align*}
    \wtrcl(L_{\min},\, R_{\min}) =(\#\rowl(L_{\min}), \#\coll(R_{\min})),
\end{align*}
is such that
\begin{align}
\label{eq:teo_weight}
    \#\rowl(L_{\min}) < \frac{d_a}{2} \quad\text{ and }\quad
    \#\coll(R_{\min}) < \frac{d_b^T}{2}.
\end{align}
Then, on input $\mathscr{D}_{\delta_A}$, $\mathscr{D}_{\delta_B^T}$, $S$ and $(L,\,R)$, ReShape outputs a correct solution $(\tilde L,\, \tilde R)$ of \eqref{eq:se}, provided that the classical decoders $\mathscr{D}_{\delta_A}$, $\mathscr{D}_{\delta_B^T}$ succeed. In other words, the solution $(\tilde L, \tilde R)$ found by ReShape is in the same homology class as the minimum weight operator with syndrome $S$: 
\begin{align*}
    [L_{\min}, \, R_{\min}] = [\tilde L,\, \tilde R].
\end{align*}
\end{proposition}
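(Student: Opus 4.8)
The plan is to verify that ReShape's output $(\tilde L,\tilde R)$ is \emph{valid}, $\sigma(\tilde L,\tilde R)=S$, and \emph{correct}, $[\tilde L,\tilde R]=[L_{\min},R_{\min}]$. I would run the whole argument through the canonical form \eqref{eq:cf}, viewing $L\mapsto O_L$ as the \emph{linear} map $\Pi_L$ that projects each row of $L$ onto the fixed complement $(\im\delta_B^T)^{\bot}$ of \eqref{eq:nice_basis} along $\im\delta_B^T$, with $M_L=L-O_L$, and dually $R\mapsto O_R=\Pi_R(R)$ a column-wise projection onto $(\im\delta_A)^{\bot}$ along $\im\delta_A$.

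\emph{Validity.} Reading off Algorithm~\ref{algo:reshape}, the left branch outputs $\tilde L=L+P$, where the columns of $P$ are the kernel vectors $\rho^{j}\in\ker\delta_A$ returned by $\mathscr{D}_{\delta_A}$, and the right branch outputs $\tilde R=R+P'$, where the rows of $P'$ are the vectors $\rho_i\in\ker\delta_B^{T}$ returned by $\mathscr{D}_{\delta_B^{T}}$. Since $\delta_A P=0$ and $P'\delta_B=0$, Equation~\eqref{eq:se} gives $\sigma(\tilde L,\tilde R)=\delta_A L+R\delta_B=S$; equivalently, ReShape adds an \emph{a priori non-trivial} logical $Z$-operator $(P,P')$ to $(L,R)$, which is exactly what lets it repair an arbitrary valid solution. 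I would then record what the Decode step produces: because the rows of $P$ lie in $(\im\delta_B^T)^{\bot}$, the canonical form of $\tilde L$ has $M_{\tilde L}=M_L$ and $O_{\tilde L}=O_L+P$, so by \eqref{eq:decode_min} and the assumed success of $\mathscr{D}_{\delta_A}$ each column $O_{\tilde L}^{j}=O_L^{j}+\rho^{j}$ is a minimum-weight element of the coset $O_L^{j}+\ker\delta_A$; dually each row $(O_{\tilde R})_i$ is a minimum-weight element of $(O_R)_i+\ker\delta_B^{T}$, and $M_{\tilde R}=M_R$.

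\emph{Correctness — the main obstacle.} The crucial and most delicate step is to show that the Decode step reconstructs the logical part of the minimum-weight operator \emph{exactly}, i.e.\ $O_{\tilde L}=O_{L_{\min}}$ and $O_{\tilde R}=O_{R_{\min}}$, and not merely some lower-weight surrogate. Because $(L,R)$ and $(L_{\min},R_{\min})$ share the syndrome $S$, the operator $(L+L_{\min},\,R+R_{\min})$ lies in the codespace, i.e.\ $\delta_A(L+L_{\min})=(R+R_{\min})\delta_B$; applying $\Pi_L$, which commutes with left multiplication by $\delta_A$ and annihilates $\im\delta_B^T$, yields $\delta_A(O_L+O_{L_{\min}})=0$, so $O_{L_{\min}}^{j}$ lies in the very coset $O_L^{j}+\ker\delta_A$ whose minimum-weight representative is $O_{\tilde L}^{j}$. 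A weight count now closes the gap: every column of $O_{L_{\min}}$ is supported on the non-zero rows of $O_{L_{\min}}$, so $|O_{L_{\min}}^{j}|\le\#\rowl(L_{\min})<d_a/2$ by hypothesis, hence also $|O_{\tilde L}^{j}|<d_a/2$, and therefore $|O_{\tilde L}^{j}+O_{L_{\min}}^{j}|<d_a$; but this vector lies in $\ker\delta_A=\mathcal C(\delta_A)$, all of whose non-zero elements have weight at least $d_a$, so it vanishes, i.e.\ $O_{\tilde L}^{j}=O_{L_{\min}}^{j}$. Running the mirror argument with $\Pi_R$, $\ker\delta_B^{T}=\mathcal C(\delta_B^{T})$ and the bound $\#\coll(R_{\min})<d_b^{T}/2$ gives $O_{\tilde R}=O_{R_{\min}}$.

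\emph{Conclusion.} From the previous step $(\tilde L+L_{\min},\,\tilde R+R_{\min})$ has trivial logical part, so $\wtrcl(\tilde L+L_{\min},\tilde R+R_{\min})=\wtrc(0,0)=(0,0)$, and by validity its syndrome is zero. By Corollary~\ref{cor:logical} any non-trivial logical $Z$-operator $(G_L,G_R)$ satisfies $\#\rowl(G_L)\ge d_a$ or $\#\coll(G_R)\ge d_b^{T}$, so $(\tilde L+L_{\min},\,\tilde R+R_{\min})$ cannot be a non-trivial logical; being a syndrome-zero operator that is not a non-trivial logical, it is a stabiliser, whence $[\tilde L,\tilde R]=[L_{\min},R_{\min}]$. (Equivalently one could close via Proposition~\ref{prop:invariant}: $\#\rowl$ and $\#\coll$ are subadditive, so any two syndrome-$S$ operators with logical row-column weights below $(d_a/2,\,d_b^{T}/2)$ differ by an operator of logical row-column weight below $(d_a,\,d_b^{T})$, which by Corollary~\ref{cor:logical} is not a non-trivial logical, and $\wtrcl(\tilde L,\tilde R)=\wtrc(O_{L_{\min}},O_{R_{\min}})=\wtrcl(L_{\min},R_{\min})$ meets the hypothesis.) The only step I expect real friction from is the exact reconstruction $O_{\tilde L}=O_{L_{\min}}$: it is what distinguishes "ReShape reaches the correct class" from the weaker "ReShape reaches some low-weight class", and it hinges on feeding the coset identification, the column-weight bound $|O_{L_{\min}}^{j}|\le\#\rowl(L_{\min})$, and the classical distance $d_a$ into one another just so.
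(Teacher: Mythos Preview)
Your proof is correct and uses the same essential ingredients as the paper's argument: the canonical form, the fact that $O_L^{j}$ and $O_{L_{\min}}^{j}$ lie in the same coset of $\ker\delta_A$, the decoder's minimality guarantee, and the classical distance $d_a$. The route, however, is genuinely different. The paper argues by contradiction: it assumes $[\tilde L,\tilde R]\neq[L_{\min},R_{\min}]$, extracts a non-trivial logical $(L_z,R_z)$ from the difference, picks a column $L_z^{j}\in\ker\delta_A\setminus\{0\}$, and uses the decoder's minimality to bound $|O_{\min}^{j}|\ge d_a/2$, contradicting the hypothesis on $\#\rowl(L_{\min})$. You instead prove directly the stronger intermediate statement $O_{\tilde L}=O_{L_{\min}}$ (and dually $O_{\tilde R}=O_{R_{\min}}$): your key observation that the row-wise projection $\Pi_L$ commutes with left multiplication by $\delta_A$ and kills anything of the form $(\cdot)\delta_B$ cleanly identifies the coset, and the triangle-inequality step $|O_{\tilde L}^{j}+O_{L_{\min}}^{j}|<d_a$ forces equality column by column. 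This buys you a more informative conclusion---ReShape does not merely land in the right homology class, it exactly reconstructs the logical part of the minimum-weight operator---and the final appeal to Corollary~\ref{cor:logical} with $\wtrcl=(0,0)$ is immediate. The paper's contradiction is slightly shorter but less transparent about \emph{why} the decoder succeeds; your direct argument makes the mechanism explicit.
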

It is important to note that the condition \eqref{eq:teo_weight} on the weight of the original error is on its row-column weight, while usually decoding success is assessed depending on the weight of an operator, meaning the number of its non-identity factors. Obviously, for any operator $(L,\,R)$ it holds:
\begin{equation*}
\#row(L) \le |L| \quad\text{and}\quad \#col(R) \le |R|.
\end{equation*}
As a consequence, Proposition \ref{prop:decoder} entails that ReShape succeeds in correcting any $Z$-error of weight up to half the code distance $d_z = \min\{d_A, d_B^T\}$. Combining this with the cost analysis of the Split and Decode steps detailed above, gives a proof of Theorem \ref{thm:reshape_success}.

It is worth to observe that actually ReShape can correct errors of weight strictly bigger than half the code distance, as long as they are not too `spread'. In fact, whenever an error is homologically equivalent to an operator $(L, R)$ such that $L$ has `few' non-zero rows and $R$ has `few' non-zero columns, ReShape succeeds. Formally, because by definition:
\begin{equation*}
\#\row(L)\ge\#\rowl(L) \quad\text{and}\quad \#\col(R) \ge \#\coll(R).
\end{equation*}
Proposition \ref{prop:decoder} yields
\begin{corollary}
Provided that the classical decoders succeed, ReShape successfully corrects any $Z$-error $(L, R)$ with bounded row-column weight:
\begin{align*}
      \#\row(L) < \frac{d_a}{2} \quad\text{ and }\quad
    \#\col(R) < \frac{d_b^T}{2}.  
\end{align*}
\end{corollary}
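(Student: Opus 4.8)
The plan is to derive the corollary by re-running the homological mechanism behind Proposition~\ref{prop:decoder}, but applied to the error $(L,R)$ itself rather than to a minimum-weight operator. The first step is a triviality: since $0 \in \im \delta_B^T$, any row of $L$ that fails to lie in $\im \delta_B^T$ is in particular non-zero, so $\rowl(L) \subseteq \row(L)$, and symmetrically $\coll(R) \subseteq \col(R)$. Hence
\begin{equation*}
\#\rowl(L) \le \#\row(L) < \frac{d_a}{2}, \qquad \#\coll(R) \le \#\col(R) < \frac{d_b^T}{2} ,
\end{equation*}
so $(L,R)$ already has $(d_a/2,d_b^T/2)$-bounded logical row-column weight --- the very condition \eqref{eq:teo_weight} that Proposition~\ref{prop:decoder} imposes on $(L_{\min},R_{\min})$.

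The second step records that $\#\rowl$ and $\#\coll$ are subadditive. The logical part of an operator is cut out by a fixed linear map (the projection onto $(\im \delta_B^T)^\bot$ applied to each row, resp.\ onto $(\im \delta_A)^\bot$ applied to each column), so the logical part of a sum is the sum of logical parts; consequently a row of $L+L'$ is logically non-zero only when the corresponding row of $L$ or of $L'$ is, giving $\#\rowl(L+L') \le \#\rowl(L) + \#\rowl(L')$ and likewise for $\#\coll$. Combined with Corollary~\ref{cor:logical} this yields the crucial rigidity statement: among the valid solutions of~\eqref{eq:se} there is at most one homology class with $(d_a/2,d_b^T/2)$-bounded logical row-column weight. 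Indeed, if $(L',R')$ had syndrome $S = \sigma(L,R)$, satisfied the same bound, and $[L',R'] \neq [L,R]$, then $(L+L',R+R')$ would be a non-trivial logical $Z$-operator with $\#\rowl(L+L') \le \#\rowl(L)+\#\rowl(L') < d_a$ and $\#\coll(R+R') \le \#\coll(R)+\#\coll(R') < d_b^T$, contradicting Corollary~\ref{cor:logical}.

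To conclude, recall that ReShape minimises the logical row-column weight: its left Decode loop replaces $L$ by a homologically shifted operator making $\#\rowl$ as small as possible over $\mathcal L_z^{\mathrm{left}}$ while leaving $R$ untouched, and its right Decode loop does the same for $\#\coll$ over $\mathcal L_z^{\mathrm{right}}$ while leaving the logical rows of $L$ untouched. Hence, provided the classical decoders $\mathscr{D}_{\delta_A}$ and $\mathscr{D}_{\delta_B^T}$ succeed, its output $(\tilde L,\tilde R)$ satisfies $\#\rowl(\tilde L) \le \#\rowl(L) < d_a/2$ and $\#\coll(\tilde R) \le \#\coll(R) < d_b^T/2$, because $(L,R)$ is among the valid solutions scanned. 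Thus $(\tilde L,\tilde R)$ is a valid solution of \eqref{eq:se} with $(d_a/2,d_b^T/2)$-bounded logical row-column weight, and by the rigidity statement it must lie in the homology class $[L,R]$ --- which is exactly the assertion that ReShape corrects the error $(L,R)$. One may equivalently phrase the same argument around $(L_{\min},R_{\min})$, recovering the corollary as a literal consequence of Proposition~\ref{prop:decoder} and the invariance of Proposition~\ref{prop:invariant}.

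I expect the only real friction to be in the final step: one must be certain that the Decode loops actually \emph{attain} the minima of $\#\rowl$ and of $\#\coll$, not merely decrease them --- this is precisely where the hypothesis that the classical decoders succeed is used, and it is already the substance of the Split/Decode analysis preceding Proposition~\ref{prop:decoder} --- together with the small bookkeeping remark that the left loop leaves $\#\coll(R)$ fixed and the right loop leaves $\#\rowl(L)$ fixed, so that both bounds are inherited at once. Everything else is the elementary set-inclusion $\rowl \subseteq \row$, $\coll \subseteq \col$, the subadditivity above, and Corollary~\ref{cor:logical}.
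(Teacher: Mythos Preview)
Your proof is correct, and its first step is exactly the paper's: the paper's entire derivation of the corollary is that $\#\rowl(L)\le\#\row(L)$ and $\#\coll(R)\le\#\col(R)$ place the error $(L,R)$ itself in the regime of Proposition~\ref{prop:decoder}, whose proof never uses the minimality of $(L_{\min},R_{\min})$ and so applies verbatim with $(L,R)$ in its place to give $[\tilde L,\tilde R]=[L,R]$ directly.

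Your longer route through the rigidity argument (subadditivity of $\#\rowl$, $\#\coll$ combined with Corollary~\ref{cor:logical}) is correct and attractive, and is not in the paper. But it shifts the burden to your step~3, showing that ReShape's output itself has bounded $\wtrcl$, and for this you appeal to the slogan ``ReShape minimises $\#\rowl$''. Be a little careful: the Decode loop minimises the \emph{column weights} $|\tilde O^j|$ of the logical part (Eq.~\eqref{eq:decode_min}), not the row count $\#\row(\tilde O)$, and when the classical decoder has ties these can genuinely differ. What is true --- and is precisely the computation inside the proof of Proposition~\ref{prop:decoder} --- is that under the hypothesis $\#\rowl(L)<d_a/2$ each column $O_L^j$ has weight $<d_a/2$; since $O_L^j$ and $\tilde O^j$ lie in the same coset of $\ker\delta_A$ (their difference comes from an element of $\ker\partial_0$, hence from $\ker\delta_A$ column-wise by Lemma~\ref{lemma:ker}) and both have weight $<d_a/2$, their difference is a codeword of weight $<d_a$, hence zero. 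So $\tilde O=O_L$ and your step~3 holds. But its justification is exactly Proposition~\ref{prop:decoder}'s argument, not a general minimisation principle, at which point the rigidity scaffolding becomes decorative rather than load-bearing. Your closing remark about recovering the result ``as a literal consequence of Proposition~\ref{prop:decoder}'' is the right instinct, provided one reads that proposition with $(L,R)$ substituted for $(L_{\min},R_{\min})$ --- a substitution the proof permits but the statement does not.
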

To sum up, ReShape successfully solves the decoding problem for any hypergraph product code requiring only $k$ oracle calls to a classical decoder for the seed matrices, where $k$ is the logical dimension of the code. Furthermore, it is able to correct for a vast class of errors of weight strictly bigger than half the code distance, provided that they have a `good' shape. Here by `good' we mean errors of low logical column-row weight but arbitrary Hamming weight as for instance the $Z$-operator pictured in Figure \ref{fig:a_z_operator}, that has Hamming weight $23$ but logical row-column weight $(1, 0)$ and would therefore be successfully corrected by the ReShape decoder.

The next Section focuses on what happens when we cannot control the shape of the errors but we assume that the probability of a given error to occur decays exponentially in its weight.
\subsection{ReShape for Stochastic noise}
\label{sec:stochastic}
Up till now, we have focused on the adversarial noise model: errors on qubits are always correctable because we assume they have weight less than half the code distance. In real systems though, this is rarely the case and it is more faithful to assume that errors are sampled accordingly to a local stochastic noise model, where qubits errors have arbitrary location but the probability of a given error decays exponentially in its weight \cite{dennis02}. More precisely the probability of a Pauli error $E \in \mathcal P_n$ to occur is given by:
\begin{equation}
\label{eq:prob_error}
    \mathbb P(E) = p^{|E|} (1-p)^{n-|E|},
\end{equation}
meaning that Pauli errors on each of the $n$ qubits are independent and identically distributed.
Under the binomial distribution associated to Eq.~\eqref{eq:prob_error}, the expected error weight on the encoded state is $pn$. Because the best possible distance scaling for the hypergraph product codes is $\sim \sqrt{n}$ (when the classical seed codes have linear distance), as $n$ increases, we eventually find $pn > \sqrt{n}/2 \sim d/2$. Nonetheless, it is well known that LDPC hypergraph product codes do have a positive error correcting threshold \cite{Kovalevbadcode}. A family of codes has threshold $p_{\mathrm{th}} >0$ if, for noise rate below $p_{\mathrm{th}}$, non-correctable errors that destroy the logical information occur with probability $p_{\mathrm{non-correctable}}$ which decays exponentially in the system size \cite{Dennis01, Kovalevbadcode, fawzi2018}:
\begin{align}
\label{eq:threshold}
   p_{\mathrm{non-correctable}}\propto\left( \frac{p}{p_{th}}\right)^{\alpha d^{\beta}}
\end{align}
for some $\alpha, \beta >0$. It is important to stress that Eq.~\eqref{eq:threshold} does not contrast with the fact that the typical error under the stochastic noise model will have weight $pn$. Instead, Eq.~\eqref{eq:threshold} entails that, among all the errors sampled, the non-correctable ones are only a small fraction. Beyond the theoretical threshold that Kovalev and Pryadko proved in \cite{Kovalevbadcode}, the literature offers several numerical evidence of decoders for hypergraph product or related families of codes which exhibit a threshold. Nonetheless these decoders either lack a correctness proof, e.g. BP in \cite{panteleev2019degenerate, roffe2020decoding}, or need some additional constraints on the seed matrices, e.g. expander codes with small-set flip decoder \cite{fawzi2018}, or augmented surface codes with the union-find decoder in \cite{delfosse2020union}.

On the contrary, for any choice of the seed matrices in the hypergraph product, ReShape is provably correct for adversarial errors. Not surprisingly though, ReShape does not show a threshold and a possible intuition for its anti-threshold behaviour is the following. 

If we contrast Reshape with pairs of LDPC codes families and decoders which exhibit a threshold, such as expander codes with the small-set flip decoder \cite{fawzi2018} or hypergraph product codes with BP \cite{ panteleev2019degenerate, roffe2020decoding}, a feature of difference is the `locality' of the decoding algorithms. 
Loosely, we say that a decoding algorithm is local if errors affecting distant regions on the qubit graph are dealt with separately and independently. We stress that a decoder's locality is a feature of the algorithm and it is not related to the locality of the code's stabiliser generators. A code can have local stabilisers, meaning that for a given layout of qubits in the space, stabiliser generators only involve qubits in a limited area, and yet be equipped with a non-local decoding algorithm. Indeed, ReShape is such a decoder.  It is a non-local decoder that can be used on the very much local planar code. Locality of the decoding algorithm is relevant because local stochastic errors tend to form small disjoint clusters on the qubit graph which do not destroy the logical information as long as they are (1) small enough (2) sufficiently far apart. Therefore, if a decoder manages to mimic the error cluster distribution on the qubit graph and finds recovery operators accordingly, then it is likely to preserve the logical information and show a threshold. ReShape, on the other hand, has a deeply global nature. The Split step \emph{groups} all the clusters of flipped qubits scattered across the qubit graph in a small pre-assigned region; a recovery is then chosen (Decode step) based on the syndrome information in this pre-assigned region. If we take the planar code as an example (see Figure \ref{fig:reshape_example}), the Split step groups the error (and the syndrome) weight on one column of the left qubits. The subsequent Decode step decodes that column and finds a recovery operator with supported on the column. Because for the planar code a logical $Z$-operator can be chosen to have support on only one column of the left qubits, this procedure can easily destroy the logical information.


\begin{figure*}[t]
    \centering
    \includegraphics[width=\linewidth, keepaspectratio]{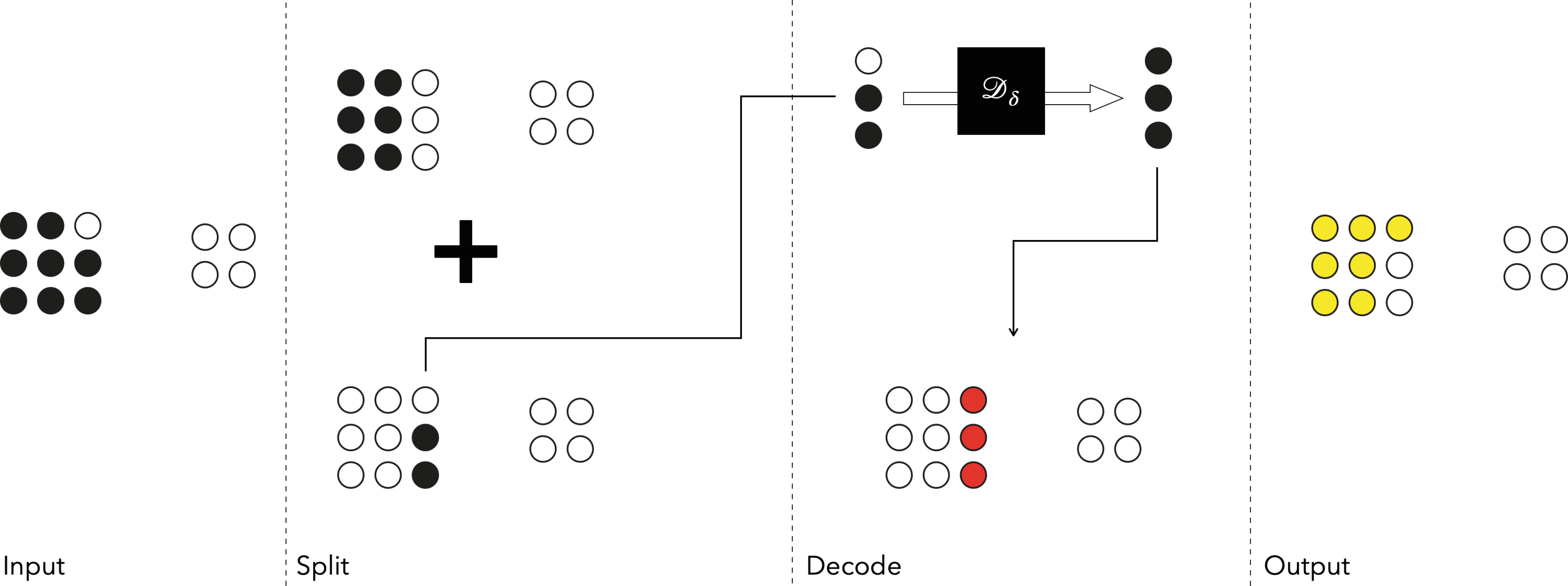}
    \caption{Graphical representation of one instance of ReShape for $Z$-errors. The code considered is the planar code of distance $3$ (toric code with boundaries) or, equivalently, the $[[13, 1, 3]]$ hypergraph product code $\mathcal{C}(\delta, \delta)$ for $\delta$ full-rank parity check matrix of the distance-$3$ repetition code i.e.\,leftmost matrix in (\ref{eq:repetition_code}). We use the same graphical representation used in Figure \ref{fig:qubits_grid}. A minimum weight logical $Z$ operator for $\mathcal{C}(\delta, \delta)$ can be chosen to have support on all the qubits of a column of left qubits (Decode, bottom grid of qubits, support in red). The row span of $\delta$ consists of all vectors in $\ff^3$ of even weight, hence for a generic $Z$-operator on $\mathcal{C}(\delta, \delta)$, all the rows of left qubits that have an even number of filled qubits belongs to its free part and do not contribute to its logical row-column weight. Since $\delta$ is full-rank, its column span is the whole space $\ff^2$, and therefore a column displaying any choice of filled qubits is in the image of $\delta^T$. As such, there is no contribution to the logical-row column weight from the right part of the operator. In particular, there is no need to run the ReShape decoder on the right part: Algorithm \ref{algo:reshape} will not execute lines \ref{alg:for1_right} - \ref{alg:for2_right_end}. \protect\newline
  The figure is divided into to four sectors, one for each stage of the decoding cycle: Input, Split, Decode and Output. \protect\newline
    \textbf{Input}: the valid solution in input $(L, R)$ has support represented by the black qubits. The operator $(L, R)$ has Hamming weight $8$.\protect\newline
    \textbf{Split} - Algorithm \ref{algo:reshape}, lines \ref{alg:for1_left} - \ref{alg:for1_left_end} : $(L, R)$ is written as sum of its free part $(M_L, M_R)$ (top); and its logical part $(O_L, O_R)$ (bottom). For what said on the image of $\delta$, the free part $M_L$ of $L$ is a matrix whose rows have all even Hamming weight. Since $O_L$ has $2$ non-zero rows, $(L, R)$ has logical row-column weight $\wtrcl(L, R) = 2$. \protect\newline
    \textbf{Decode} - Algorithm \ref{algo:reshape}, lines \ref{alg:for1_left} - \ref{alg:create_l} : the non-zero column $(0, 1, 1)^T$ of the logical part $O_L$ of $L$ is given in input to a decoder $\mathscr{D}_{\delta}$ for the classical distance-3 repetition code. The solution found is $(1, 1, 1)^T$, represented by the single column of black bits on the top. This solution is plugged in the hypergraph product code and yields a logical operator correction represented by the operator at the bottom with support on the red qubits.\protect\newline
    \textbf{Output} - Algorithm \ref{algo:reshape}, line \ref{alg:for2_left_end} : the output solution $(\tilde L, \tilde R)$ is obtained by adding the input operator $(L, R)$ and the operator found in the Decode step with support on the red qubits. The support of $(\tilde L, \tilde R)$ is represented by the yellow qubits. \protect\newline
    We note that the solution found $(\tilde L, \tilde R)$ has Hamming weight $7$ and, by observing that only the first rows has odd weight, we deduce that its logical row-column weight is $1$. It is easy to verify that $(\tilde L, \tilde R)$ is indeed homologically equivalent to the minimum weight solution $(\hat L_{1,3}, 0)$ where $\hat L_{1, 3} \in \ff^{3 \times 3}$ is the matrix with all zeros but for the $(1, 3)$-th entry which is $1$. In fact, $(\tilde L, \tilde R) = (\hat L_{1, 3}, 0) + S^z(1, 1) + S^z(2, 1) + S^z(3, 1)$ thus, by \eqref{eq:homology}, $[\tilde L, \tilde R] = [\hat L_{1, 3}, 0]$.}
    \label{fig:reshape_example}
\end{figure*}
Our intuition on the performance of ReShape under stochastic noise finds confirmation in the plots reported in Figure \ref{fig:anti_threshold}. Even if at first sight the plots in Figure \ref{fig:anti_threshold_toric} and \ref{fig:anti_threshold_ldpc} could indicate the presence of a very low threshold (below $1 \%$), a closer analysis suggests that this is not the case. In fact, as $d$ increases, the crossing point between the dashed curve labelled $d=0$ and the $d$-curves slips leftwards. Since the dashed curve is the locus of points where the failure probability $p_{\mathrm {fail}}$ equates the noise rate $p$, it corresponds to the case of no encoding i.e.\ $d=0$. The common crossing point, in other words, represents the \emph{pseudo-threshold} of the code \cite{svore2005flow}. Importantly, if a code family has a threshold $p_{\mathrm{th}}$ in the sense of Eq.~\eqref{eq:threshold}, then all the codes of the family crosses the curve $d=0$ at the same point of coordinates $(p_{\mathrm{th}},p_{\mathrm{th}} )$. 
Figure \ref{fig:pseudo_toric} clearly illustrates this left slipping phenomenon for the toric codes without boundaries. For close distances $d=6,8,10$, there it seems to be a common crossing point with the $d=0$ curve. However, the crossing point lowers if we increase $d$ more substantially, e.g. $d=20$. The situation appears less clear in Figure \ref{fig:pseudo_ldpc} because the pseudo-threshold seems to increase with the distance of the code. Still though, there is no common crossing point of the three curves; besides, we would expect the same trend as the one observed for the toric codes if codes of bigger distance were considered.

In conclusion, ReShape is not suited to tackle stochastic errors on $[[n, k, d]]$ code in the regime where typical errors have weight exceeding $d/2$. 

\begin{figure*}[tb]
\centering
    \subfloat[]{   
    \label{fig:anti_threshold_toric}
    \includegraphics[scale=0.85]{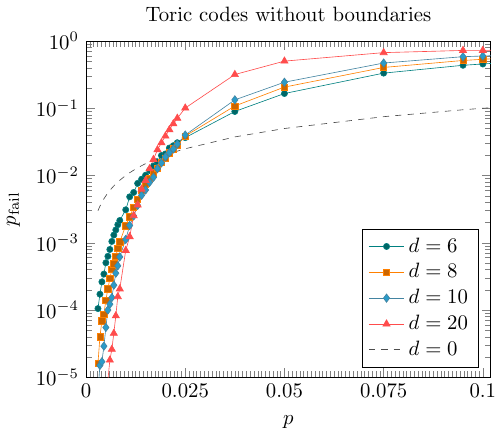}}\hfill
     \subfloat[]{  
     \label{fig:anti_threshold_ldpc}
     \includegraphics[scale=0.85]{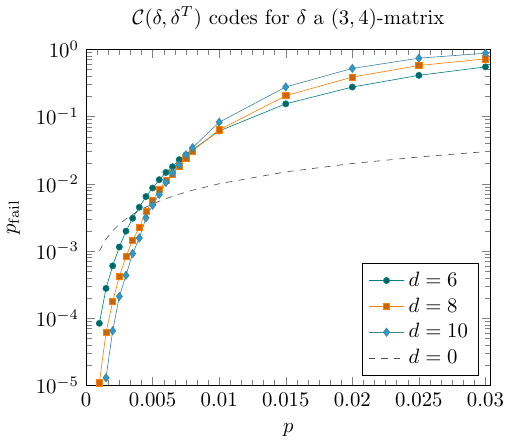}}
     \vfill
     \subfloat[]{\label{fig:pseudo_toric}\includegraphics[scale=0.85]{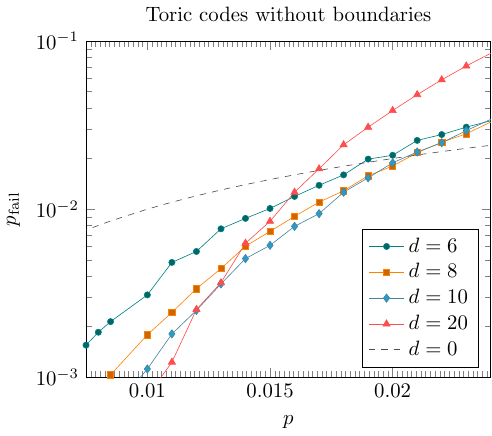}}
     \hfill
     \subfloat[]{\label{fig:pseudo_ldpc}\includegraphics[scale=0.85]{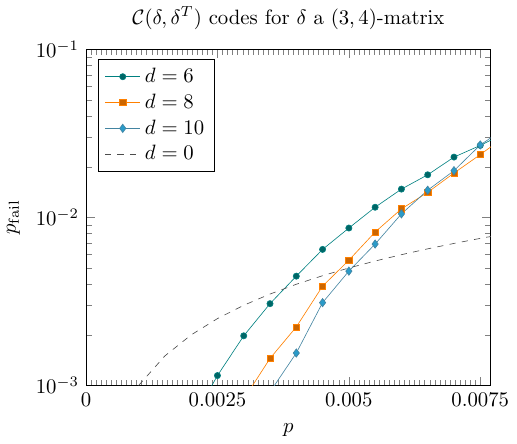}}
    \caption{Anti-threshold behaviour of ReShape on two families of hypergraph product codes. For both families, we plot the failure probability $p_{\mathrm{fail}}$ as a function of the phase-flip noise rate $p$ for codes with $Z$-distance $d$. All data points are generated with at least $10^4$ Monte Carlo trials. (a) and (c): The toric code without boundaries. (b) and (d): A family of hypergraph product codes $\mathcal{C}(\delta, \delta^T)$ where $\delta$ is a full-rank $(3, 4)$-matrix randomly generated, see \cite{roffe2020decoding}.}
    \label{fig:anti_threshold}
\end{figure*}

\section{Conclusions and outlook}
In this paper we determined some important homology invariants of hypergraph product codes. Exploiting these invariants, we designed the ReShape decoder. ReShape is the first decoder to efficiently decode for all errors up to half the code distance, across the whole spectrum of hypergraph product codes.

We foresee two natural extensions of this work.
The first is to adapt ReShape for it to work in the stochastic noise model settings. Because ReShape actually succeeds in correcting errors of weight substantially bigger than $(d-1)/2$ (namely it corrects error of weight as big as $\sim d^2$, when they have the right shape!), this gives us some hope that ReShape would work under stochastic noise if paired with the right clustering technique. For instance, something on the line of the clustering methods used in the renormalisation group or the union-find decoders \cite{bravyi2011analytic, bravyi2013quantum, delfosse2017almost, delfosse2021union_ldpc}. 

The second is to find the corresponding invariants for other families of homological product codes. Specifically, for the codes in \cite{evra2020decodable}, which have `rectangular' shaped logical operators instead of `string' like as the standard hypergraph product codes here studied; or the balanced product codes proposed in \cite{breuckmann2020balanced}. Once found, the right invariants could be plugged-in an appropriately modified version of ReShape and yield a provable correct decoder for these class of codes too.

\appendices
\section{Linear algebra: Space complement}
\label{app:linear_algebra}
In this Appendix we review some known linear algebra facts that we use in our proofs. We refer the reader for instance to \cite{lang2005undergraduate, herstein1996abstract} for a detailed presentation on the topic.

Consider a $m \times n$ binary matrix $\delta$. If $\rk(\delta) = \mathrm{rk}$ then we can choose binary vectors $v_1, \dots, v_{\mathrm{rk}}$  in $\mathbb F_2^{m}$ whose span is $\im \delta$:
\begin{align*}
\im \delta = \langle v_1, \dots, v_{\mathrm{rk}} \rangle.
\end{align*}
Let $\id$ be the $m \times m$ identity matrix. Perform Gaussian reduction on the $(\mathrm{rk} + m)$-row matrix $M$:
\begin{align*}
    M = \begin{pmatrix} v_1 \\
    \vdots\\
    v_{\mathrm{rk}}\\
    \hline
    \id
    \end{pmatrix}.
\end{align*}
By selecting the pivot rows, we obtain a basis of $\mathbb F_2^m$ of the form:
\begin{align}
\label{eq:nice_basis}
    \{v_1, \dots, v_{\mathrm{rk}}, f_{\mathrm{rk}+1}, \dots, f_{m}\},
\end{align}
where the $f_i$ are unit vectors. Letting:
\begin{equation*}
     (\im \delta)^{\bullet} \coloneqq \langle f_{\mathrm{rk}+1}, \dots, f_{m}\rangle,
\end{equation*}
we have:
\begin{equation}
\label{eq:splitting_spaces}
    \mathbb F^m = (\im \delta ) \oplus (\im \delta)^{\bullet}.
\end{equation}
We refer to the space $(\im \delta)^{\bullet}$ as complement of the space $\im \delta$.
We remark that the complement $V^{\bullet}$ is not equal to the orthogonal complement $V^{\bot}$. To see how this is the case, consider
\begin{align*}
    V = \langle \begin{pmatrix}
    1 \\ 1 \\1
    \end{pmatrix}, \begin{pmatrix}
    0 \\ 1 \\ 0
    \end{pmatrix} \rangle.
\end{align*}
Then the spaces $V^{\bullet}$ and $V^{\bot}$ can be chosen as
\begin{align*}
    V^{\bullet} = \langle \begin{pmatrix}
    0 \\ 0 \\ 1
    \end{pmatrix}\rangle , && V^{\bot} = \langle\begin{pmatrix}
    1 \\ 0 \\ 1
    \end{pmatrix}  \rangle.
\end{align*}
In particular, $V^{\bot} \subseteq V$ while $V^{\bullet} \cap V = \{0\}$.
\section{Hypergraph product codes}
\label{app:hpg_codes}
CSS codes can be easily described in terms of homology theory \cite{Kit03, bombin2007homological, bravyi2014homological} via the identification of the objects of the code with a chain complex \cite{hatcher2002algebraic}. For our purposes, a length $\ell$ chain complex is an object described by a sequence of $\ell + 1$ vector spaces $\{\mathcal C_i\}_i$ over $\mathbb F_2$ and $\ell$ binary matrices $\{\partial_i : \mathcal C_i \longrightarrow \mathcal C_{i+1}\}_i$ such that, for each $i$, $\partial_i \partial_{i-1} = 0$. In the following, we use the symbol $\partial$ to indicate the maps of a chain complex of length $\ell > 1$ and the symbol $\delta$ to indicate the map of a chain complex of length $1$. Given a chain complex $\mathfrak C$:
\begin{equation}
\label{eq:chain_css}
\tag{$\mathfrak C$}
\mathcal C_{-1} \xrightarrow[]{\partial_{-1}} \mathcal C_0 \xrightarrow[]{\partial_0}\mathcal C_1,
\end{equation}
we can define a CSS code $\mathcal{C}$ by equating:
\begin{align*}
    H_Z = \partial_{-1}^T,\quad H_X = \partial_0.
\end{align*}
Since $\partial_0\partial_{-1}=0$ by construction, $X$-type and $Z$-type operators do commute i.e.\ $H_X\cdot H_Z^T = 0$ and the code $\mathcal C$ associated to the chain complex \eqref{eq:chain_css} is well defined. The code $\mathcal C$ has length $n = \dim(\mathcal{C}_0)$ and its dimension $k$ equates to the dimension of the $0$th homology group $\mathcal{H}_0 = \ker\partial_0/\im \partial_{-1}$ or, equivalently, to the dimension of the $0$th co-homology group $\mathcal{H}_0^* = \ker{\partial_{-1}}/\im \partial_0$. Its $Z$-distance and $X$-distance are given by the minimum Hamming weight of any representative of a non-zero element in $\mathcal{H}_0$ and $\mathcal{H}_0^*$ respectively:
\begin{align*}
    d_z &= \min_{v \in \mathbb F_2^{n}} \{|v| : [v] \in \mathcal{H}_0, [v] \neq 0 \},\\
    d_x &= \min_{v \in \mathbb F_2^{n}} \{|v| : [v] \in \mathcal{H}_0^{*}, [v] \neq 0\}.
\end{align*}
An hypergraph product code $\cab$, which is a CSS code, can be easily defined in terms of product of chain complexes.
Consider the two length-1 chain complexes defined by the seed matrices $\delta_A$ and $\delta_B$:
\begin{align*}
    C^0_A &\xrightarrow[]{\delta_A} C^1_A,\\
   C^0_B &\xrightarrow[]{\delta_B} C^1_B.
\end{align*}
We define their homological product as follows. Take the tensor product spaces $\mathcal C_{-1}= C^0_A \ox C_B^1$ and $\mathcal C_1 =C^1_A \ox C^0_B$ and the direct sum space $\mathcal C_0 = C^0_A \ox C^0_B \oplus C^1_A \ox C^1_B$. The chain complex $\mathfrak{
C}_{A, B}$:
\begin{equation}
\label{eq:cab}
\tag{$\mathfrak{C}_{A, B}$}
\begin{tikzcd}[scale cd=0.75]
& C^1_A \ox C^0_B\arrow[from=2-1, "\delta_A\ox\id"]  \arrow[from=2-3, swap, "\id\ox \delta_B^T"]&\\
C^0_A \ox C^0_B &  & C^1_A \ox C^1_B \\
& C^0_A \ox C^1_B \ar[to=2-1, "\id \ox \delta_B^T"] \ar[to=2-3, swap, "\delta_A\ox \id"]&
\end{tikzcd}
\hspace{0.5 cm}
\begin{tikzcd}[scale cd=0.75]
\mathcal{C}_1 \\
\mathcal{C}_0 \ar[to=1-1, swap, "\partial_{0}"]\\
\mathcal{C}_{-1}\ar[to=2-1, swap, "\partial_{-1}"]
\end{tikzcd}
\end{equation}

\begin{equation*}
\partial_{-1}=\left(\id \ox \delta_B, \quad \delta_A^T \ox \id \right)^T,\\
\partial_0=\left(\delta_A \ox \id, \quad \id \ox \delta_B^T\right)
\end{equation*}
is well defined. In fact:
\begin{align*}
    \partial_{-1} \partial_0 = \delta_A \ox \delta_B^T + \delta_A \ox \delta_B^T = 0.
\end{align*}
Therefore, the complex $\eqref{eq:cab}$ defines a valid CSS code, which we denote by $\cab$ and refer to as the hypergraph product code of the seed matrices $\delta_A$ and $\delta_B$. If the classical code with parity check $\delta_{\ell}, \delta_{\ell}^T$ has parameters $[n_{\ell}, k_{\ell}, d_{\ell}]$ and $[n_{\ell}^T, k_{\ell}^T, d_{\ell}^T]$ respectively ($\ell = A, B$) then the hypergraph product code $\cab$ has parameters:
\begin{align*}
[[n_an_b+n_a^Tn_b^T,\, k_ak_b + k_a^Tk_b^T,\, d_x,\, d_z]],
\end{align*}
where $d_x=\min\{d_a^T, d_b\}$ and $d_z=\min\{d_a, d_b^T\}$, see \cite{bravyi2014homological}.
\subsection{Reshaping of vectors}
\label{sec:reshaping}
One tool we make extensive use of, and from which our decoder takes its name, is the reshaping of vectors of a two-fold tensor product space into matrices (see, for instance, \cite{bravyi2014homological, campbell2019theory}). Consider a basis $\mathcal B$ of the vector space $\ff^{n_1} \ox \ff^{n_2}$:
\begin{equation*}
\mathcal{B}=\{a_i \ox b_j \mid i = 1, \dots, n_1 \text{ and } j = 1, \dots, n_2\}.
\end{equation*}
Then any $v \in \mathbb{F}_2^{n_1} \ox \mathbb{F}_2^{n_2}$ can be written as:
\begin{align*}
    v = \sum_{a_i \ox b_j \in \mathcal {B}} v_{ij} (a_i \ox b_j),
\end{align*}
for some $v_{ij} \in \mathbb F_2$. We call the $n_1 \times n_2$ matrix $V$ with entries $v_{ij}$ the reshaping of the vector $v$.
By this identification, if $\varphi$, $\theta$ are respectively $m_1 \times n_1$ and $m_2 \times n_2$ matrices, then $(\varphi \ox \theta) (V) = \varphi V \theta^T $. The inner product between $u \ox w$ and $v$ in $ \mathbb{F}_2^{n_1} \ox \mathbb{F}_2^{n_2}$ can be computed as 
\begin{align}
\label{eq:inner_product}
   \langle u \ox w, v \rangle = u^T V w.
\end{align}

As we here detail, the identification of operators on the code space $\cab$ with pairs of binary matrices that we used in the main text is rigorously justified by the reshaping of vectors into matrices.
With slight abuse of notation, we refer to binary vectors and binary matrices as operators and vice versa, where the identification is clear via Eq.~\eqref{eq:z_operator}.
\subsection{Graphical representation}
Physical qubits of the code $\cab$ are in one-to-one correspondence with basis elements of the space $\mathcal{C}_0$. If $\{e_{j_a}\}_{j_a}$, $\{e_{j_b}\}_{j_b}$, $\{e_{i_a}\}_{i_a}$, $\{e_{i_b}\}_{i_b}$ are bases of the spaces $C^0_A, C^0_B, C^1_A, C_B^1$ of dimension $n_a, n_b, m_a, m_b$ respectively, then the union of the two sets
\begin{align*}
    \mathcal{B}_L &\coloneqq \{(e_{j_a} \ox e_{j_b}, \, 0)\}\\
    \mathcal{B}_R &\coloneqq \{(0, \, e_{i_a}\ox e_{i_b})\}
\end{align*}
is a basis of $\mathcal{C}_0$. We refer to qubits associated to elements in $\mathcal{B}_L$, or its span, as \emph{left} qubits and to those associated to $\mathcal{B}_R$, or its span, as \emph{right} qubits. Since qubit operators are vectors in $\mathcal{C}_0$, by reshaping, they can be identified with pairs of matrices $(L,\,R)$ where $L$ has size $n_a \times n_b$ and $R$ has size $m_a \times m_b$; in particular, $L$ acts on the left qubits while $R$ acts on the right qubits.

A $Z$-stabilizer for the code associated to the complex \eqref{eq:cab} is any vector in $\im \partial_{-1}$. A generating set for $Z$-stabilizers is:
\begin{equation*}
\mathcal{S}_z \coloneqq \{\partial_{-1}(e_{j_a} \ox e_{i_b})\}_{j_a,i_b}    
\end{equation*}
 where $e_{j_a}$ and $e_{i_b}$ are unit vectors of $C^0_A$ and $C^1_B$ respectively, i.e.\ they are a basis of the two spaces. Let $E_{j_ai_b} \in C^0_A \ox C_B^1$ be the reshaping of $(e_{j_a}\ox e_{i_b})$, i.e.\ it is the matrix with all zeros entries but for the $(j_a,i_b)$-th entry which is 1. The reshape of $\partial_{-1}(e_{j_a} \ox e_{i_b})$ is then given by the pair of matrices:
\begin{equation*}
(L, R) = (E_{j_ai_b}\delta_B,\delta_AE_{j_ai_b}).
\end{equation*}
Logical $Z$-operators are vectors in $\ker \partial_0$ which are not in $\im 
\partial_{-1}$. Specifically, a minimal generating set of logical $Z$-operators is given by \cite{zeng2019higher}:
\begin{align}
\label{eq:logical_z}
\hat{\mathcal{L}}_z&\coloneqq \hat{\mathcal{L}}_z^{\mathrm{left}} \cup \hat{\mathcal{L}}_z^{\mathrm{right}}
\end{align}
where
\begin{align*}
    \hat{\mathcal{L}}_z^{\mathrm{left}} \coloneqq \big\{(k_a \ox e_{j_b}, 0) &: k_a \text{ varies among a basis of } \ker \delta_A,\\ 
    &\quad |e_{j_b}|=1 \text{ and it varies}\\
    &\quad\text{among a basis of } \im(\delta_B^T)^{\bullet}\big\},
\end{align*}
and
\begin{align*}
    \hat{\mathcal{L}}_z^{\mathrm{right}} \coloneqq \big\{(0, e_{i_a}\ox \bar k_b) &: |e_{i_a}| = 1 \text{ and it varies}\\
    &\quad\text{among a basis of } \im (\delta_A)^{\bullet},\\
    &\quad \bar k_b \text{ varies among a basis of } \ker \delta_B^T\big\}.
\end{align*}
The reshaping of vectors in $\hat{\mathcal{L}}_z$ gives the set $\mathcal{L}_z$ of Eq.~\eqref{eq:matrix_logical_z} in the main text. The vector version of logical $X$-operators is likewise obtained from the set of matrices $\mathcal L_x$ of Eq.~\eqref{eq:matrix_logical_x}.

\section{Proofs}
\label{app:proofs}
This Section contains all the proofs of the statements made in the main text.

Broadly speaking, in this work we wanted to characterize $Z$-errors operators on the codespace of $\cab$ associated to the chain complex \eqref{eq:cab}. In order to do so, we first studied the logical $Z$-operators of $\cab$ and introduced a canonical form for them. From homology theory, we know that non-trivial logical $Z$-operators are associated to vectors in $\ker \partial_0$ which do not belong to $\im \partial_{-1}$. Lemma \ref{lemma:ker} below describes all the vectors in $\ker \partial_0$.
\begin{lemma}
\label{lemma:ker}
Let $(L, R) \in \mathcal C_0$ be in $\ker \partial_0$, then:
\begin{align*}
    L &\in \ker \delta_A \ox C_B^0 + C_A^0 \ox \im \delta_B^T,\\
    R &\in C_A^1 \ox \ker \delta_B^T + \im \delta_A \ox C_B^1.
\end{align*}
\end{lemma}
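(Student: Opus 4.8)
The plan is to argue in the abstract tensor-product picture of the chain complex~\eqref{eq:cab} rather than with the reshaped matrices, since the direct-sum structure is then transparent and the $L$- and $R$-statements become mirror images. Recall $\partial_0=(\delta_A\ox\id,\ \id\ox\delta_B^T)$ as a map $\mathcal C_0=(C^0_A\ox C^0_B)\oplus(C^1_A\ox C^1_B)\to\mathcal C_1=C^1_A\ox C^0_B$, so that $\partial_0(L,R)=(\delta_A\ox\id)(L)+(\id\ox\delta_B^T)(R)$ and the hypothesis $(L,R)\in\ker\partial_0$ reads $(\delta_A\ox\id)(L)=(\id\ox\delta_B^T)(R)$ (we work over $\ff$, so signs are irrelevant).

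First I would record the elementary facts about maps acting on a single tensor factor: for linear $f\colon V\to W$ and any $\ff$-space $U$ one has $\ker(f\ox\id_U)=\ker f\ox U$ and $\im(f\ox\id_U)=\im f\ox U$, and symmetrically $\ker(\id_U\ox g)=U\ox\ker g$, $\im(\id_U\ox g)=U\ox\im g$ for $g\colon V\to W$; each is immediate on choosing a basis of $U$ and writing the map block-diagonally. Applying these, $\im(\id\ox\delta_B^T\colon C^1_A\ox C^1_B\to C^1_A\ox C^0_B)=C^1_A\ox\im\delta_B^T$, so the kernel condition forces $(\delta_A\ox\id)(L)\in C^1_A\ox\im\delta_B^T$; dually $\im(\delta_A\ox\id\colon C^0_A\ox C^0_B\to C^1_A\ox C^0_B)=\im\delta_A\ox C^0_B$, so the same condition forces $(\id\ox\delta_B^T)(R)\in\im\delta_A\ox C^0_B$.

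For the claim about $L$, I would fix a complement $U$ with $C^0_B=\im\delta_B^T\oplus U$ (as in App.~\ref{app:linear_algebra}, $U=(\im\delta_B^T)^\bot$) and split $L=L_1+L_2$ along $C^0_A\ox C^0_B=(C^0_A\ox\im\delta_B^T)\oplus(C^0_A\ox U)$. Since $\delta_A\ox\id$ acts only on the first factor it respects the induced splitting on the target, so $(\delta_A\ox\id)(L_1)\in C^1_A\ox\im\delta_B^T$ and $(\delta_A\ox\id)(L_2)\in C^1_A\ox U$; as $(\delta_A\ox\id)(L)$ lies in $C^1_A\ox\im\delta_B^T$ and $C^1_A\ox C^0_B=(C^1_A\ox\im\delta_B^T)\oplus(C^1_A\ox U)$ is a direct sum, the $C^1_A\ox U$-component $(\delta_A\ox\id)(L_2)$ must vanish, giving $L_2\in\ker(\delta_A\ox\id)=\ker\delta_A\ox C^0_B$ while $L_1\in C^0_A\ox\im\delta_B^T$ by construction. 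The $R$-statement is obtained by the transposed argument: pick $U'$ with $C^1_A=\im\delta_A\oplus U'$, split $R=R_1+R_2$ along $(\im\delta_A\ox C^1_B)\oplus(U'\ox C^1_B)$, use that $\id\ox\delta_B^T$ acts only on the second factor together with $(\id\ox\delta_B^T)(R)\in\im\delta_A\ox C^0_B$ to force $(\id\ox\delta_B^T)(R_2)=0$, and conclude $R_2\in C^1_A\ox\ker\delta_B^T$, $R_1\in\im\delta_A\ox C^1_B$.

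I do not anticipate a real obstacle: the content is the bookkeeping of tensor-product kernels and images plus one direct-sum projection argument, and the only point requiring care is that the ``$\bot$'' complements are mere vector-space complements, not metric-orthogonal ones---precisely the notion fixed in App.~\ref{app:linear_algebra}, so the splittings used above exist. If one prefers matrices, the identical steps read: from $\delta_A L=R\delta_B$ one sees every column of $R\delta_B$ lies in $\im\delta_A$ and every row in $\im\delta_B^T$, and splitting the rows of $L$ (resp.\ the columns of $R$) along the fixed complement isolates the summand that the relevant map annihilates; the tensor formulation simply makes the two sides manifestly symmetric.
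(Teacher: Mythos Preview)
Your proof is correct and proceeds by a slightly different device than the paper's. The paper works with the reshaped matrices: setting $V\coloneqq\delta_A L = R\delta_B$, it observes that $V$ has all columns in $\im\delta_A$ and all rows in $\im\delta_B^T$, hence $V=\delta_A W\delta_B$ for some matrix $W$, and then $\delta_A(L+W\delta_B)=0$ gives the decomposition of $L$ explicitly as $W\delta_B\in C_A^0\ox\im\delta_B^T$ plus $L+W\delta_B\in\ker\delta_A\ox C_B^0$. You instead fix a complement of $\im\delta_B^T$ in $C_B^0$, split $L$ along the induced direct sum, and use that $\delta_A\ox\id$ respects the second-factor decomposition to force the complementary piece into $\ker(\delta_A\ox\id)=\ker\delta_A\ox C_B^0$. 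Both arguments are short and elementary; the paper's route produces an explicit witness $W\delta_B$ for the $C_A^0\ox\im\delta_B^T$ summand (which dovetails with the canonical-form splitting used downstream), while your projection argument sidesteps the factorization step and makes the $L$/$R$ symmetry manifest without ever passing to matrices.
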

\begin{proof}
Let $(L, R) \in \ker \partial_0$. Then:
\begin{align}
\nonumber
    \partial_0 (L, R) = 0 &\Longleftrightarrow (\delta_A \ox \id) L + (\id \ox \delta_B^T) R = 0,\\
    \label{eq:proof_zero}
    &\Longleftrightarrow \delta_A L + R \delta_B =0.
\end{align}
Eq.~\eqref{eq:proof_zero} yields:
\begin{align}
\delta_A L = R \delta_B = V,
    \label{eq:proof_v}
\end{align}
for some $V \in C_A^{1} \ox C_B^0$. Eq.~\eqref{eq:proof_v} entails that all columns of $V$ belong to $\im \delta_A$ while its rows belong to $\im \delta_B^T$. As a consequence, it must exists $U \in C_A^{0}\ox C_B^{1}$ such that:
\begin{align*}
    V = \delta_A U \delta_B.
\end{align*}
Therefore Eq.~\eqref{eq:proof_v} can be re-written as:
\begin{align*}
    \delta_A L = \delta_A U \delta_B
\end{align*}
which yields:
\begin{align}
    (\delta_A \ox \id)(L + U \delta_B)& = \delta_A L + \delta_A U \delta_B \nonumber\\
    &= V + V\nonumber\\
    &=0 \label{eq:zero_sum}
\end{align}
Equivalently, Eq.~\eqref{eq:zero_sum} states that $L + U \delta_B$ has columns in $\ker \delta_A$:
\begin{align*}
    L + U \delta_B \in \ker \delta_A \ox C_B^0
\end{align*}
and therefore:
\begin{align*}
    L \in \ker \delta_A \ox C_B^0 + C_A^0 \ox \im \delta_B^T,
\end{align*}
as in the thesis. Similarly, we find 
\begin{align*}
   R \in C_A^1 \ox \ker \delta_B^T + \im \delta_A \ox C_B^1. 
\end{align*}
\end{proof}
A proof of Proposition \ref{prop:canonical_form}, reported below for clarity, follows directly combining what said in Appendix \ref{sec:reshaping} and Lemma \ref{lemma:ker}.
\begin{customprop}{\ref{prop:canonical_form}}[Canonical form]
Let $(L, R)$  be a $Z$-operator on the codespace of $\cab$. For a vector space $V \subseteq \ff^n$, we denote by $V^{\bullet}$ any space such that $V \oplus V^{\bullet} \simeq \ff^n$, (see Appendix \ref{app:linear_algebra}). Then, for the operator $(L, R)$, the left part $L$ can be expressed as a sum of a \emph{free part} $M_L$ and a \emph{logical part} $O_L$ such that every row of $M_L$ belongs to $\im \delta_B^T$ and every row of $O_L$ belongs to $(\im \delta_B^T)^{\bullet}$. Similarly, the right part $R$ can be expressed as a sum of a free part $M_R$ and a logical part $O_R$ such that every column of $M_R$ belongs to $\im \delta_A$ and every column of $O_R$ belongs to $(\im \delta_A)^{\bullet}$. Hence, for $(L, R)$ holds:
\begin{align}
\label{eq:cf_2}
\tag{CF}
(L, R) = (M_L + O_L, M_R + O_R).
\end{align}
We refer to the writing given by Eq.~\eqref{eq:cf_2} as \emph{canonical form} of the operator $(L,R)$.
\end{customprop}

In the main text, we have introduced the notions of row-column weight and logical row-column weight for a $Z$-operator on $\cab$. The definition of these two quantities finds its explanation in Proposition \ref{prop:logical-op}, whose proof builds on the results of Lemma \ref{lemma:ker}.
\begin{customprop}{\ref{prop:logical-op}}
If $(L,\,R)$ is a non-trivial logical $Z$-operator of $\cab$ then either $\#\row(L) \ge d_a$ or $\#\col(R)\ge d_b^T$ (or both). 
\end{customprop}
\begin{proof}
If $(L, R)$ is a non-trivial logical $Z$-operator, it must anti-commute with at least one logical $X$-operator $(L_x, R_x)$. Because a $Z$-operator and a $X$-operator anti-commute if and only if their supports overlap on an odd number of positions, either $L$ and $L_x$ or $R$ and $R_x$ have odd overlap. Without loss of generality, we can assume that the former is verified and we can choose $(L_x, R_x)$ as a left operator of the form 
\begin{align*}
    (L_x, R_x) = (f \ox k, 0),
\end{align*}
where $f$ is a unit vector in $(\im \delta_A^T)^{\bullet}$ and $k \in \ker \delta_B$. In other words, we choose logical $X$-operator $(f\ox k, 0)$ from the set of generators of $X$-logical operators $\hat{\mathcal{L}}_{x}^{\mathrm{left}}$, as in the $X$-version of Eq.~\eqref{eq:logical_z}. The inner product equation for reshaped vectors Eq.~\eqref{eq:inner_product} then yields:
\begin{align}
1 = \big\langle (L_x, 0), (L,\,R)\big\rangle &= \langle L_x, L \rangle + \langle 0, R\rangle \nonumber\\
\label{eq:anticomm}
& =f^TL k .
\end{align}
Now, observe that $(L,\,R) \in \mathcal C_0$ is a non-trivial logical $Z$-operator of $\cab$ if and only if $[L, R]\in \mathcal{H}_0 = \ker \partial_0 / \im \partial_{-1}$ and $[L, R] \neq 0$ or, equivalently, if and only if:
\begin{align*}
  (L,\,R) \in \ker \partial_0\setminus \im \partial_{-1}  .
\end{align*}
In particular, $(L, R)$ belongs to $\ker \partial_0$ and thanks to Lemma \ref{lemma:ker}, we can re-write it as:
\begin{align*}
    (L, R) = (K_A + U_L \delta_B, \bar K_B + \delta_A U_R),
\end{align*}
where columns of $K_A$ belong to $\ker \delta_A$ and rows of $\bar{ K}_B$ belong to $\ker \delta_B^T$. Using Lemma \ref{lemma:ker}'s decomposition for $(L, R) \in \ker \partial_0$, we can expand the matrix-vector product $Lk$ as:
\begin{align}
Lk &= (K_A + U_L \delta_B) k \nonumber\\
&= K_A k + U_L\delta_B k& \nonumber\\
\label{eq:column_in_ker}
&=K_A k &\text{since } k \in \ker \delta_B.
\end{align}
Eq.~\eqref{eq:column_in_ker} entails $Lk=K_A k$ and therefore that $Lk$, being a linear combination of column-vectors in $\ker \delta_A$, belongs to $\ker \delta_A$ itself. Furthermore, by Eq.~\eqref{eq:anticomm}, $Lk \neq 0$. To sum up, $Lk$ is a non-zero vector in $\ker \delta_A$ and therefore it must have Hamming weight at least $d_a$. As a consequence, $L$ is a matrix with at least $d_a$ rows:
\begin{align*}
\#\row(L) \ge d_a.
\end{align*}
Similarly, we would have found:
\begin{align*}
\#\col(R) \ge d_b^T,
\end{align*}
if we had assumed that $(L,R)$ anti-commuted with a logical $X$-operator $(0, R_x)$ in $\hat{\mathcal{L}}_x^{\mathrm{right}}$.
\end{proof}
Corollary \ref{cor:logical} follows easily. 
\begin{customcorollary}{\ref{cor:logical}}

If $(L, R)$ is a non-trivial logical $Z$-operator on $\cab$, at least one of the following hold:
\begin{enumerate}[label=(\roman*)]
\item \label{pt:cor_logical_one} $L$ has at least $d_a$ rows which are not in $\im \delta^T_B$ when seen as vectors in $C^0_B$.
\item \label{pt:cor_logical_two} $R$ has at least $d_b^T$ columns which are not in $\im \delta_A$ when seen as vectors of $C^1_A$. 
\end{enumerate}
\end{customcorollary}

\begin{proof}
Write $(L, R)$ in its canonical form:
\begin{align*}
    (L, R)= (M_L + O_L, M_R + O_R),
\end{align*} 
and let 
\begin{align*}
    M_L &= N_L \delta_B,\\
    M_R &= \delta_A N_R,
\end{align*}
for some binary matrices $N_L, N_R$ of size $n_a \times m_b$. As done in the proof of Proposition \ref{prop:logical-op}, consider a logical $X$-operator $(f \ox k, 0)$ such that it anti-commutes with $(L, R)$. 
Combining the canonical form of $L$ and Eq.~\eqref{eq:column_in_ker}, yields:
\begin{align*}
    Lk &= (O_L + N_L \delta_B)k \\
    &= O_L k &\text{since }k \in \ker \delta_B\\
    &= K_A k &\text{by Eq.~\eqref{eq:column_in_ker}}
\end{align*}
for some $n_a \times n_b$ matrix $K_A$ with columns in $\ker \delta_A$. By the same argument used in the proof of Proposition \ref{prop:logical-op}, we find:
\begin{align*}
    |K_A k| \ge d_A \Rightarrow |O_L k| \ge d_A,
\end{align*}
and in particular that $O_L$ has at least $d_a$ non-zero rows. Since by definition of canonical form the non-zero rows of $O_L$ are precisely those rows of $L$ which do not belong to $\im \delta_B^T$, we have proven point \ref{pt:cor_logical_one}. Point \ref{pt:cor_logical_two} follows similarly in the case $(L, R)$ anti-commutes with at least one logical $X$-operator of the form $(0, R_x)$.
\end{proof}
Corollary \ref{cor:logical}, together with Proposition \ref{prop:invariant} below, justifies the definition of the logical row-column weight for $Z$-operators on $\cab$ (Definition \ref{definition:logical_weight}). The logical row-column weight of $(L, R)$ is denoted by the symbol $\wtrcl(L, R)$ and stands for the integer pair $(\#\rowl(L),\, \#\coll(R))$ where $\#\rowl(L)$ is the number of rows of $L$ that are not in $\im \delta_B^T$ and $\#\coll(R)$ is the number of columns of $R$ which are not in $\im \delta_A$. Proposition \ref{prop:invariant}, that we now prove, states that the logical row-column weight of a $Z$-operator on $\cab$ is an homology invariant of the chain complex \eqref{eq:cab} and therefore it legitimates the name choice for this quantity. 
\begin{customprop}{\ref{prop:invariant}}
The logical row-column weight of a $Z$-operator on $\cab$ is an invariant of its homology class.
\end{customprop}
\begin{proof}
Let $[L, R]$ be the homology class of $(L,\,R)$: 
\begin{align*}
[L,\, R] = \big\{(L + G_L,\, R+G_R) : &\,(G_L, \,G_R)\big. \\
&\big.\text{ is a $Z$-stabiliser}\big\}.
\end{align*} 
The operator $(G_L, G_R) \in \mathcal{C}_0$ is a $Z$-stabilizer for $\cab$ if and only if 
\begin{align}
    (G_L, G_R) &= \partial_{-1}(U) \nonumber\\
    &=(U\delta_B, \delta_A U) \label{eq:stab}
\end{align}
For some $n_a \times m_b$ binary matrix $U$. Eq.~\eqref{eq:stab} entails that any row of $G_L$ belongs to $\im \delta_B^T$ and any column of $G_R$ belongs to $\im \delta_A$. Therefore, if we write $(L, R)$ in its canonical form:
\begin{align*}
    (L, R) = (M_L + O_L, M_R + O_R),
\end{align*}
we see that we can `delete' all the rows of $M_L$ by adding a stabiliser and hence `move' part of the support of the operator $(L, R)$ from the left qubits to the right qubits. Specifically, if $M_L = N_L \delta_B$ for some $n_a \times m_b$ binary matrix $N_L$, we consider the stabiliser $G = (N_L \delta_B, \delta_A N_L)$ and we obtain:
\begin{align*}
    (L, R) + G = (O_L, M_R + O_R + \delta_A N_L).
\end{align*}
Similarly, we could move the $M_R$ part of the operator $(L, R)$ from the right qubits to the left qubits, by adding the stabilizer $G' = (N_R \delta_B, \delta_A N_R )$, for a $n_a \times m_b$ matrix $N_R$ such that $M_R = \delta_A N_R$. 

On the other hand though, it is not possible to delete non-zero rows of $O_L$ via stabiliser addition. In other words, it is not possible to remove, via stabiliser addition, any of the rows of $L$ that are not in $\im \delta_B^T$. Hence, the number $\#\rowl(L)$ of non-zero rows of $O_L$ is an homology invariant. Likewise, we find that it is not possible to delete any column in $O_R$ by adding stabilisers and therefore $\#\coll(R)$ is a logical invariant too.
\end{proof} 
The proof of Proposition \ref{prop:invariant} actually entails a stronger result than the invariance of the row-column weight of $Z$-operators on $\cab$. Namely, we have proven that the indices of the rows and the columns in the sets $\rowl$ and $\coll$ respectively, are homology invariants of the reshaped $Z$-operators $(L, R)$ on $\cab$. However, because to prove the correctness of ReShape it is sufficient to look at the cardinality of the two sets $\rowl$ and $\coll$, we decided to state Proposition \ref{prop:invariant} in this more compact and elegant form.

We can now prove Proposition \ref{prop:decoder}.
\begin{customprop}{\ref{prop:decoder}}
Let $S$ be a $X$-syndrome matrix for $\cab$ and $(L, R)$ any valid solution to the Syndrome Equation
\begin{align*}
    \tag{\ref{eq:se}}\sigma(L,\,R)=S.
\end{align*}
Suppose that the minimum weight operator $(L_{\min},\, R_{\min})$ with syndrome $S$ has $(d_a/2, d_b^T/2)$-bounded logical row-column weight i.e.\:
\begin{align*}
    \wtrcl(L_{\min},\, R_{\min}) =(\#\rowl(L_{\min}), \#\coll(R_{\min})),
\end{align*}
is such that
\begin{align}
\tag{\ref{eq:teo_weight}}
    \#\rowl(L_{\min}) < \frac{d_a}{2} \quad\text{ and }\quad
    \#\coll(R_{\min}) < \frac{d_b^T}{2}.
\end{align}
Then, on input $\mathscr{D}_{\delta_A}$, $\mathscr{D}_{\delta_B^T}$, $S$ and $(L,\,R)$, ReShape outputs a correct solution $(\tilde L,\, \tilde R)$ of \eqref{eq:se}, provided that the classical decoders $\mathscr{D}_{\delta_A}$, $\mathscr{D}_{\delta_B^T}$ succeed. In other words, the solution $(\tilde L, \tilde R)$ found by ReShape is in the same homology class as the minimum weight operator with syndrome $S$: 
\begin{align*}
    [L_{\min}, \, R_{\min}] = [\tilde L,\, \tilde R].
\end{align*}
\end{customprop}
\begin{proof} 
This is a proof by contradiction: we suppose that the minimum weight solution and the solution found by ReShape (Algorithm \ref{algo:reshape}) are not homologically equivalent and we find as a consequence that the minimum weight solution need to have high logical row-column weight.

Let $(L, R)$ be the valid solution of \eqref{eq:se} in input to ReShape and $(\tilde L,\, \tilde R)$ be the recovery operator found. 

First note that $\sigma(\tilde L, \tilde R) = \sigma(L, R)$. In fact, the Split step only finds the canonical form of $(L, R)$ and therefore changes neither the operator $(L, R)$ nor its syndrome. The Decode step, possibly adds to $(L, R)$ logical $Z$-operators $(L_z, R_z)$ such that $\sigma(L_z, R_z) = 0$ and therefore, even when it changes the operator, it preserves its syndrome. 

Suppose now that the solution found by ReShape and the minimum weight solution $(L_{\min}, R_{\min})$ of \eqref{eq:se} belong to two different homology classes:
\begin{align*}
    [\tilde L ,\, \tilde R] \neq [L_{\min},\, R_{\min}],
\end{align*}
where:
\begin{align*}
    \#\rowl(L) < \frac{d_a}{2} \quad\text{ and }\quad
    \#\coll(R) < \frac{d_b^T}{2}.
\end{align*}
Since both $(L_{\min}, R_{\min})$ and $(\tilde L, \tilde R)$ are valid solution of \eqref{eq:se}, they must differ for an operator with zero $X$-syndrome. Because $(L_{\min}, R_{\min})$ and $(\tilde L, \tilde R)$ are not homologically equivalent, they must differ for a non-trivial $Z$-operator in the normaliser $\mathcal N(\mathcal{S})$ of the stabiliser group. As such, they must differ for an operator which is the sum of a $Z$-stabiliser and a non-trivial logical operator:
\begin{align}
\label{eq:diff_solutions}
    (L_{\min}, R_{\min}) = (\tilde L, \tilde R) + (G_L, G_R) + (L_z, R_z),
\end{align}
where $(G_L, G_R)$ is a $Z$-stabiliser and $(L_z, R_z)$ is a non-trivial logical $Z$-operator.

Without loss of generality we assume that $(L_z, R_z)$ is non-trivial on the left qubits, meaning that $L_z$ has at least one non-zero column in $\ker \delta_A$. The proof is substantially the same in case it is non-trivial on the right qubits. 

First, write the left operators $L_{\min}$ and $\tilde L$ in their canonical form with respect to the same unit-vector basis used to write the logical operators in $\mathcal L_z$ (see Eq.~\eqref{eq:nice_basis}):
\begin{align*}
    L_{\min} &= M_{\min} + O_{\min},\\
    \tilde L &= \tilde M + \tilde O.
\end{align*}
Note that, by construction, the left operator $L_z + G_L$ is already in its canonical form, where $L_z$ is its logical part and $G_L$ is its free part. By Eq.~\eqref{eq:splitting_spaces}, the sum is direct and therefore the equality given by Eq.~\eqref{eq:diff_solutions} must hold component-wise for the free part and the logical part:
\begin{align}
    M_{\min} &= \tilde M + G_L \nonumber\\
    \label{eq:logical_part}
    O_{\min} &= \tilde O + L_z.
\end{align}
Let now focus on the logical part equality expressed by Eq.~\eqref{eq:logical_part} and let $L_z^j$ be a non-zero column of $L_z$ in $\ker \delta_A$. Then:
\begin{align}
\label{eq:colum_logical_part}
    O_{\min}^j = \tilde O^j + L_z^j, \quad L_z^j \in \ker{\delta_A}.
\end{align}
Eq.~\eqref{eq:decode_min} for the classical decoder $\mathscr{D}_{\delta_A}$, entails:
\begin{align*}
   |\tilde O^j |= \min_{k \in \ker \delta_A} |v + k|  
\end{align*}
for some input vector $v$ defined by $L$. In particular, no vector $k' \in \ker \delta_A$ can overlap with $\tilde O^j$ in more than $d_A/2$ positions, otherwise we would have $|v + k'| < |\tilde O^j|$, against the assumption that $|\tilde O^j|$ is minimum. Thanks to this observation and considering the Hamming weight of the terms in Eq.~\eqref{eq:colum_logical_part}, we obtain:
\begin{align*}
    \lvert O_{\min}^j\rvert &= \lvert \tilde O^j + L_z^j \rvert\\
    &= |\tilde O^j| + |L_z^j| - 2 |\tilde O^j \wedge L_z^j|\\
    & \ge \lvert L_z^j\rvert - \lvert \tilde O^j \wedge L_z^j \rvert\\
    &\ge d_a - \frac{d_a}{2} = \frac{d_a}{2}, &\text{by Eq.~\eqref{eq:colum_logical_part}}.
\end{align*}
Because the weight of any of the columns of a matrix is a lower bound on the number of its non-zero rows, we have:
\begin{align*}
    \#\row(O_{\min}) \ge \frac{d_a}{2}.
\end{align*}
By definition of canonical form, this is equivalent to:
\begin{align*}
    \#\rowl(L_{\mathrm{min}}) \ge \frac{d_a}{2},
\end{align*}
against the assumption. 

We stress that the number $\#\row(O_{\min})$ of rows of $L_{\min}$ which do not belong to $\im \delta_B^T$, does not depend on the particular splitting chosen for the canonical form. In fact, as stated in Proposition \ref{prop:invariant}, the logical row weight of the left part of a $Z$-operator is an homology invariant. An argument similar to the one just outlined for the left part of $(L_{\min}, R_{\min})$ holds for its right part and yields:
\begin{align*}
   \#\coll(R_{\min}) \ge d_b^T/2, 
\end{align*}
again contradicting the assumption. In conclusion, we have reached a contradiction and therefore it must be:
\begin{align*}
    [L_{\min}, R_{\min}] = [\tilde L, \tilde R].
\end{align*}
\end{proof}

\section*{Acknowledgement}
We thank Christophe Vuillot for helpful discussions and for carefully reading a draft of this work. AOQ thanks Joschka Roffe for providing the matrices used in the simulations. 
ETC's contributions were made while he was at the University of Sheffield. 
\bibliographystyle{IEEEtran}
\bibliography{ReShapeLib}
\end{document}